\documentclass{amsart}
\usepackage{amssymb}

\usepackage[centertags]{amsmath}
\usepackage{times}
\usepackage{amsfonts,amssymb}
\usepackage{times}
\usepackage{mathrsfs,bm,amsbsy}
\usepackage{xcolor}

\newtheorem{theorem}{Theorem}[section]
\newtheorem{lemma}[theorem]{Lemma}

\newtheorem{proposition}[theorem]{Proposition}

\newtheorem{remark}[theorem]{Remark}

\newtheorem{definition}[theorem]{Definition}

\newcommand{\one}{\mathbf{1}}

\newcommand{\PP}{\mathbb{P}}

\newcommand{\EE}{\mathbb{E}}

\newcommand{\NN}{\mathbb{N}}
\newcommand{\RR}{\mathbb{R}}
\newcommand{\Lg}{\mathcal{L}}

\newcommand{\MM}{\mathcal{M}}
\renewcommand{\d}{\partial}

\newcommand{\dif}{\mathrm{d}}

\newcommand{\FF}{\mathscr{F}}

\newcommand{\dd}{\mathrm{d}}

\DeclareMathAlphabet{\mathsfsl}{OT1}{cmss}{m}{sl}
\DeclareMathOperator{\divop}{div}

\title[A Mathematical Theory of Optimal Milestoning]{A Mathematical Theory of Optimal Milestoning\\ (with a Detour via
Exact Milestoning)}

\author{Ling Lin} 
\address{Courant Institute of
  Mathematical Sciences, New York University, 251 Mercer Street, New
  York, NY 10012, USA}
\curraddr{Department of Mathematics, City University
  of Hong Kong, Tat Chee Avenue, Kowloon Tong, Hong Kong}
\email{linglin@cityu.edu.hk}

\author{Jianfeng Lu}
\address{Mathematics Department, Duke University, Box 90320,
  Durham NC 27708, USA} 
\email{jianfeng@math.duke.edu}

\author{Eric Vanden-Eijnden} \address{Courant Institute of
  Mathematical Sciences, New York University, 251 Mercer Street, New
  York, NY 10012, USA}
\email{eve2@cims.nyu.edu}

\date{\today} 

\thanks{The work of J.L. is partially supported by the
  National Science Foundation under grant DMS-1454939.}

\begin{document}

\begin{abstract}
  Milestoning is a computational procedure that reduces the dynamics
  of complex systems to memoryless jumps between intermediates, or
  milestones, and only retains some information about the probability
  of these jumps and the time lags between them. Here we analyze a variant
  of this procedure, termed optimal milestoning, which relies on a
  specific choice of milestones to capture exactly some kinetic
  features of the original dynamical system. In particular, we prove
  that optimal milestoning permits the exact calculation of the mean
  first passage times (MFPT) between any two milestones. In so doing,
  we also analyze another variant of the method, called exact
  milestoning, which also permits the exact calculation of certain
  MFPTs, but at the price of retaining more information about the
  original system's dynamics. Finally, we discuss importance sampling
  strategies based on optimal and exact milestoning that can be used
  to bypass the simulation of the original system when estimating the
  statistical quantities used in these methods.
\end{abstract}

\maketitle

\section{Introduction}
\label{sec:intro}

Relying on the enormous power of modern computing technologies, with
advances such as special purpose high-performance computers,
high-performance graphical processing units (GPUs), massively parallel
simulations, etc. scientific computing has been playing an ever
growing role as a tool to study complex systems and analyze their
dynamics at an unprecedented level of details. Molecular dynamics (MD)
simulations, for example, can nowadays be used to probe the function
of large biomolecules and other complex molecular systems at
spatio-temporal scales that are beyond experimental reach, thereby
opening the door to a first-principle understanding of these
systems. Similarly, general circulation and global climate models
(GCMs) are used to simulate the dynamics of the coupled
atmosphere/ocean system at ever higher resolutions and are responsible
for the increasing accuracy of weather forecasting and climate change
predictions. These advances do not come without challenges,
however. The dynamics of complex systems often involve complicated
activated processes, such as reactive events arising in kinetic phase
transitions, conformational change of macromolecules, or regime
changes in climate. These processes require the system to cross over
(free) energy barriers or make long diffusive transitions, and they
occur on very long time scales that even today are difficult to reach
by brute-force numerical simulations. On top of this, bare simulation
data in these systems are typically very large and intricate, and
therefore hard to analyze. These difficulties call for the development
of analytical and computational techniques to (i) identify quantities
that characterize the essential features of system's kinetics at a
coarser level, and (ii) accelerate the calculation of these quantities
via techniques that bypass the brute-force simulation of the original
system.
 
The milestoning method, originally introduced by Elber in \cite{FE04}
and further developed e.g. in \cite{SF06,WES07,E07,VEVCE08,VEV09,BRE15} is an approach that aims at achieving both these objectives.
The main idea behind milestoning is to reduce the overall system's
dynamics to transition events between intermediates, or milestones, in
its phase-space. Milestoning assumes that such transitions are
memoryless, and retains only the information about the probabilities
that a given milestone will be reached first after another, and the
mean (more generally, the distribution) of the lag-times between these
transitions.  These statistical quantities may for instance be
estimated from large amounts of simulation data, so that milestoning
can be viewed as a data-processing tool and used to analyze long time
series from numerical simulations. Alternatively, these quantities can
be sampled efficiently by running several local simulations
independently. In this way the method is akin to an importance
sampling technique and can be used to accelerate the numerical
simulations by generating sub-trajectories directly in regions of low
probability rather than having to wait a long time until an unbiased
trajectory visits these regions.

Even though milestoning has become quite a popular method by now, work
remains to be done to give it a rigorous mathematical foundation. The
main objective of this paper is to contribute to this effort. One of
the main issues is to determine under which conditions the
coarse-grained description of milestoning still retains useful and
accurate kinetic information about the original system. For example,
one is often interested in the mean time the dynamics takes to go from
one region of its phase space to another. In the context of
milestoning, this amounts to asking what is the mean first passage
time (MFPT) from one milestone to another one far away, after many
transitions via other milestones in between. Does milestoning permit
the accurate estimation of such MFPTs?  Clearly, one cannot expect it
to be the case unless certain conditions about the system dynamics
and/or the milestones are met, and this has led to two main routes of
justification of the method.

The first is to restrict oneself to systems whose dynamics is
metastable, i.e. such that we can identify `hubs' in its phase-space
that the system visits often but between which it seldom
transitions. Under appropriate assumption, the transition between
these hubs can then be approximately described by a Markov jump
process, which justifies the milestoning description if the hubs are
used as milestones~\cite{SNLSVE11,GVE13}. The mathematical
justification of this picture relies on tools from spectral
theory~\cite{DSS10,DSS12,SNS10} and potential
theory~\cite{BEGK02,BEGK04,BGK05} that have been used to analyze
metastability, and we will consider it in a forthcoming
publication. In the present paper, we will instead focus on another
route that has been proposed to justify milestoning. This route is
based on the observation, originally made in~\cite{VEVCE08}, that
\emph{there exists a particular way to pick the milestones such that
  the method permits the exact calculation of MFPTs, regardless on
  whether its dynamics is metastable or not.} The version of
milestoning that involves this particular choice of milestones was
termed \textit{optimal milestoning} in~\cite{VEVCE08}, and our purpose
here is to justify it rigorously. In the process of doing so, we will
also discuss another variant of milestoning, the so-called \emph{exact
  milestoning}~\cite{BRE15,ABRE16}, which also permits the exact
calculation of certain MFPTs but at the price of retaining more
information about the original system's dynamics, namely the exact
location at which the process reaches a milestone first after hitting
another -- in this sense exact milestoning is somewhat closer in
spirit to methods such as forward flux sampling
(FFS)~\cite{AFTW06a,VAMFTW07} or transition interface sampling
(TIS)~\cite{MVEB04}, and even more so to non-equilibrium umbrella
sampling methods~\cite{WBD07,DWD09} such as trajectory parallelization
and tilting~\cite{VEV09b}, than to the original milestoning
method. Finally, we will also discuss how to accelerate the sampling
of the statistical quantities needed in optimal milestoning (and in
exact milestoning too). Let us remark that our analysis of optimal
milestoning is connected to the study of coarse-graining without
timescale separation performed in \cite{LVE14}.

The remainder of this paper is organized as follows. In
Sec.~\ref{sec:setupmainresults} we start by formulating the set-up of
milestoning that we will study (Sec.~\ref{sec:framework}) and then
list our main results regarding the exact calculation of MFPTs within
optimal milestoning (Sec.~\ref{sec:mainresults}). In
Sec.~\ref{sec:fhc} we make a detour by exact milestoning, as this
discussion will allow us to better understand what the realizability
of optimal milestoning entails, in particular in terms of the
existence of an invariant family of distributions on the
milestones. In Sec.~\ref{sec:ifd}, the existence of these
distributions is discussed in detail, and these results are then used
in Sec.~\ref{sec:exactcalc} to explain why and how MFPTs can be
calculated exactly within optimal milestoning. In
Sec.~\ref{sec:accelerate} we go on discussing how to accelerate the
sampling of the key statistical quantities optimal milestoning relies
upon. Sec.~\ref{sec:conclusion} gives a few concluding remarks, and
several appendices contain the proofs of our more technical results.

\section{Set-up and main results}
\label{sec:setupmainresults}

\subsection{Set-up}
\label{sec:framework}

We shall focus on situations where the original process is a diffusion
on $\RR^d$ with  infinitesimal generator $\Lg$ whose action on a test
function $f:\RR^d\to\RR$ is given by
\begin{equation}
\label{generator}
(\Lg f)(x)=\nabla\cdot\left( a(x)\nabla f(x)\right) +b(x)\cdot\nabla f(x),
\end{equation}
where $b(x) \in \RR^d$ and $a(x) \in \RR^{d \times d}$ is symmetric
and positive definite for every $x$, such that the operator $\Lg$ is
uniformly elliptic.  We assume that this diffusion is positive
recurrent and possesses a unique invariant distribution with density
$\rho(x)>0$ satisfying $\Lg^*\rho=0$, where
\begin{equation}
\label{eq:dualoperator}
0= (\Lg^*\rho)(x)=
\nabla\cdot \left( a(x)\nabla \rho(x)- b(x)\rho(x)\right).
\end{equation}
($\Lg^* $ is the formal adjoint operator of $\Lg$.) Note that we do
not assume microscopic reversibility (a.k.a. detailed-balance),
i.e. $a(x)\nabla \rho(x)- b(x)\rho(x)\not=0$ in general.  

The operator $\mathcal{L}$ is the generator of the It\^{o} stochastic
differential equation (SDE)
\begin{equation}
  \label{eq:SDE}
  d X(t) = \bigl( b(X(t)) + \divop a(X(t)) \bigr) d t + \sqrt{2} \sigma(X(t)) d W(t),
\end{equation} 
where $\sigma(x)$ satisfies $(\sigma\sigma^T)(x) = a(x)$ and $W(t)$
denotes the standard Brownian motion in $\RR^d$. We will denote by
$X \equiv \{X(t)\}_{t\in\RR_+}$ a sample path of this process on $\RR_+=[0,\infty)$,
obtained by solving~\eqref{eq:SDE}  with some
initial condition at $t=0$.

\begin{remark}
  An important example of SDE of the from~\eqref{eq:SDE} is the
  overdamped Langevin equation for a particle with position
  $Q(t) = X(t)$ moving in a potential $V:\RR^d \to \RR$ and subject to
  thermal effect at inverse temperature $\beta$:
  \begin{displaymath}
    \gamma M d Q(t) = -  \nabla V(Q(t))dt + \sqrt{2\beta^{-1} \gamma M} \, dW(t),
  \end{displaymath}
  where $\gamma$ denotes the friction tensor and $M$  the mass
  matrix. On the other hand, the inertial Langevin equation for
  $X(t) = (Q(t),P(t))$,
\begin{displaymath}
  \begin{aligned}
    \dot Q(t) &=  M^{-1} P(t) \\
    d P(t) & = -\nabla V(X(t))dt - \gamma P(t) dt + \sqrt{2\beta^{-1}
      \gamma M }\, dW(t),
  \end{aligned}
\end{displaymath}
is not of the form~\eqref{eq:SDE} because its generator is
hypoelliptic. This equation is important in view e.g. of its
applications to molecular dynamics. We believe that most of the results
listed below apply to it after minor modifications, but the proofs
would have to be modified.
\end{remark}

To introduce the coarse-grained description used in milestoning, we
begin by defining two key quantities:

\begin{definition}[Milestones]\label{def:milestones} Let $A_0\subset
  A_1\subset \cdots \subset A_N\subset \RR^d$ be
  a finite collection of nested connected sets with smooth
  boundaries. We call the boundaries of these sets the \emph{milestones},
  $M_i = \partial A_i$, $i\in I \equiv \{0,1,\ldots, N\}$. We will
  refer to $I$ as the \emph{index set} of the milestones, and denote
  $\MM=\{M_i:i\in I\}$.
\end{definition}

\begin{definition}[First and last hitting times]\label{def:hittimes} 
  For any subset $E \subset \RR^d$,  we define the
\emph{first hitting time} of $E$ after time $t$ as
$$
H^+_E(t)=\inf\left\{s\ge t : X(s)\in E \right\}
$$
and the \emph{last hitting time} of $E$ before time $t$ as
$$
H^-_E(t)=\sup\left\{s\le t : X(s)\in E \right\}.
$$
\end{definition}
%

The coarse-grained description of the trajectory used in milestoning
can now be specified in terms of these objects as follows:
\begin{definition}[Milestoning index process]\label{def:milestoning jump process}
  Let $\MM=\{M_i:i\in I\}$ be a set of milestones.  Define
  $\tau_0=H^+_{\cup_iM_i}(0)$.  For each $t\ge \tau_0$, define
  $\varXi(t)$ to be the index of the last milestone hit by $X(t)$,
  i.e.,
$$
\varXi(t)=\text{ index } i\in I  \text{ such that } X\left(H^-_{\cup_iM_i}(t)\right)\in M_i.
$$
The process $\{\varXi(t):t\ge \tau_0\}$ is called the
\emph{milestoning index process} associated with~$\MM$.
\end{definition}
\begin{remark}
  Note that the trajectory of the milestoning index process
  $\{\varXi(t): t\ge \tau_0\}$ is a piecewise constant function taking
  values in the index set $I$ which jumps from one value to another
  whenever the original trajectory $X$ reaches a new milestone. Due to
  the way we defined the milestones (see
  Definition~\ref{def:milestones}), these jumps can only be
  $\pm1$. Note also that consecutive hits of the same milestone
  without hitting another one in between will not change the value of
  $\varXi(t)$.
\end{remark}
It will also be useful to decompose the milestoning index process into
its temporal and spatial components:
\begin{definition}\label{def:skeleton}
  Let $\{\varXi(t): t\ge \tau_0\}$ be the milestoning index
  process associated with a set of milestones $\MM=\{M_i:i\in I\}$.
  Set $\xi_0=\varXi(\tau_0)$ and define
  recursively for $n\ge 1$,
\begin{equation*}
\tau_{n}=\inf\left\{t\ge\tau_{n-1} : \varXi(t)\neq \xi_{n-1} \right\},
\end{equation*}
and
\begin{equation*}
\xi_{n}=\varXi(\tau_n).
\end{equation*}
The sequence $\{(\xi_n, \tau_n):n\in\NN_0\}$ is called the
\emph{coarse-grained milestoning chain associated with $\mathcal{M}$}.
In addition, the sequence $\{\xi_n:n\in\NN_0\}$ is called the
\emph{skeleton} of the milestoning index process.
\end{definition}
\begin{remark}
  Thus the skeleton $\{\xi_n:n\in\NN_0\}$ of the milestoning index
  process gives the indices of successive milestones that the original
  trajectory $X$ hits and the sequence of jump times
  $\{\tau_n:n\in\NN_0\}$ records the first times at which these
  successive milestones are hit.
\end{remark}

In the sequel we will denote the lags between the
jump times as
\begin{equation}
  \label{eq:lags}
  \alpha_{n}=\tau_n-\tau_{n-1}, \qquad  n\in \NN.
\end{equation} 


\subsection{Main results}
\label{sec:mainresults}

Having introduced the main objects used in milestoning, we now ask
what kind of kinetic information about the original process we can
extract from them. We will focus here on the mean first passage time
(MFPT) $T_{i,j}$ from $M_i$ to $M_j$ for any $i,j\in I$ with $i\not=j$
-- other quantities of interest include the probability that, starting
from milestone $M_i$, milestone $M_j$ will be hit before milestone
$M_k$, or the invariant distribution $\pi_i$ giving the stationary
probability that the last milestone hit was $M_i$, etc. Our analysis
below will indicate how to calculate these quantities as well. 

The MFPT $T_{i,j}$ can be defined directly from a sampling view point
as follows: First, employ a subset ${\MM}^{(i,j)}\subset{\MM}$
consisting of only the two milestones $M_i$ and $M_j$. Second, introduce
as in Definition \ref{def:skeleton} a coarse-grained sequence
$\{(\xi_n^{(i,j)}, \tau_n^{(i,j)}):n\in\NN_0 \}$ associated with
${\MM}^{(i,j)}=\{M_i,M_j\}$.  Then set
\begin{equation}
  \label{eq:mfpt0}
  T_{i,j}=\lim_{n\rightarrow\infty}\frac{\sum_{p=1}^n 
    \alpha_p^{(i,j)}\delta_{i,\xi_{p-1}^{(i,j)}}}
  {\sum_{p=1}^n \delta_{i,\xi_{p-1}^{(i,j)}}}\,,
\end{equation}
where $\alpha_p^{(i,j)} = \tau_p^{(i,j)}-\tau_{p-1}^{(i,j)}.$
\eqref{eq:mfpt0} estimates from a long ergodic trajectory the average
time it takes to go from the set $M_i$ to $M_j$ after each return to
$M_i$ from $M_j$. We will prove below that this limit exists and give
an expression for it in terms of the quantities used in milestoning
that is exact under certain conditions. Specifically, we will show
that under these conditions (to be specified in a moment) $T_{i,j}$
satisfies the linear system
\begin{equation}
  \label{eq:Ti0}
  T_{i,j} = t_i + \sum_{k\in I} p_{i,k} T_{k,j}, \qquad i \in I \quad i\not=j
\end{equation}
with the boundary condition $T_{j,j} = 0$. Here $t_i$ is the average time
the trajectory is associated with the $i$th milestone, which can be
defined empirically as
\begin{equation}
  \label{eq:13}
  t_i = \lim_{n\to\infty} \frac1n \int_{\tau_0}^{\tau_n} \delta_{i, \varXi(t)} dt,
\end{equation}
and $p_{ij}$ is the probability that the trajectory hits the $j$th
milestone after leaving the $i$th milestone, which can be defined as
\begin{equation}
  \label{eq:15}
  p_{i,j} = \frac{\sum_{p=1}^n \delta _{i,\varXi(\tau_{p-1})} \delta
    _{j,\varXi(\tau_p)} }
  {\sum_{p=1}^n \delta _{i,\varXi(\tau_{p-1})} }
\end{equation}
We prove that the system~\eqref{eq:Ti0} gives the exact MFPT \textit{iff} the
milestones are chosen to be level sets of the backward committor
function, defined as follows:

\begin{definition}[Backward
  committor functions]\label{as:fbcommittor}
  Let $A$ and $B$ be two non-overlapping bounded closed sets of
  $\RR^d$, each of which is the closure of a nonempty, simply
  connected, open set. The \emph{backward committor function} is the
  classical solution to the Dirichlet problem:
\begin{equation}\label{eq:q-}
\begin{cases}
{\Lg}^\dag q^-=0 \quad \text{in $\RR^d\setminus (A\cup B)$},\\
q^-|_{A}=1, \quad q^-|_{B}=0,
\end{cases}
\end{equation}
where
\begin{equation*}
(\displaystyle{\Lg}^\dag q^-)(x) =\nabla\cdot \left(a(x)\nabla
  q^-(x)\right) -b(x)\cdot\nabla q^-(x)
+\frac{2}{\rho(x)}\langle\nabla \rho(x), a(x) \nabla q^-(x)\rangle 
\end{equation*}
is the generator of the time-reversed diffusion process, and
$\langle, \rangle$ denotes the standard inner product on
$\RR^d$. (Note that $\Lg^\dag\not=\Lg$ in general, since
we do not assume microscopic reversibility.)
\end{definition}

Specifically, we prove that \eqref{eq:Ti0} is exact if we pick any $A$
and $B$ consistent with Definition~\ref{as:fbcommittor} and set
$M_i = \{x: q^-(x) = z_i\}$ for $i\in I$ with any
$1\ge z_0>z_1>\cdots>z_N\ge 0$ -- that is, use isocommittor surfaces
as milestones. We call \emph{optimal milestoning} the method used with
such a set of milestones.

Observe the backward committor function $q^-(x)$ is the probability
that the trajectory associated with the time-reversed diffusion
process will hit $A$ first rather than $B$. Roughly speaking, it is
the probability that the trajectory $X(t)$ located at $x$ at time $0$
came from the set $A$ rather than $B$.  The backward committor
function (along with the forward one) plays a central role in
Transition Path Theory (TPT) \cite{EVE06,MSVE06,VE06,EVE09,MSVE09,EVE10,LN15}, and this framework will also prove essential in our
analysis.

The remainder of this paper is devoted to make the statements above
rigorous. As pointed out before, our analysis of optimal milestoning
is connected to the study of coarse-graining without timescale
separation in \cite{LVE14}. In fact, the latter can be viewed as using
all the isocommittor surfaces as a continuous family of milestones,
while the optimal milestoning, as shown in this work, is chosen as a
(discrete) collection of isocommittor surfaces. In some sense, the
coarse-graining proposed in \cite{LVE14} amounts to a continuous limit
of optimal milestoning.

\section{Warm-up: exact milestoning}\label{sec:fhc}

To understand better what the validity of~\eqref{eq:Ti0} entails, it
is useful to consider first a variant of milestoning, termed
\emph{exact milestoning}~\cite{ABRE16,BRE15}, in which more
information about the process is kept than in optimal milestoning.
Specifically, exact milestoning uses the \emph{first hitting chain}
defined as follows:
\begin{definition}[First hitting chain]\label{def:fhc}
  Given a set of milestones $\MM=\{M_i:i\in I\}$, let
  $\{(\xi_n, \tau_n):n\in\NN_0\}$ be the coarse-grained milestoning
  chain associated with $\MM$ introduced in
  Definition~\ref{def:skeleton} and set $Y_n=X(\tau_n)$.  The
  \emph{first hitting chain} associated with $\MM$ is the process
  $\{Y_n:n\in\NN_0\}$.
\end{definition}

Thus, the index chain $\{\xi_n:n\in\NN_0\}$ can be viewed as a
coarse-grained sequence of the first hitting chain $\{Y_n:n\in\NN_0\}$
in which one reduces the exact positions on the milestones to the
indices of these milestones. The key observation, which immediately follows from strong Markovianity, is that:

\begin{proposition}
  \label{th:firsthitmarkov}
  Let $\mathcal{M} = \{M_i:i\in I\}$ be a set of milestones as in
  Definition~\ref{def:milestones} and $\{Y_n:n\in\NN_0\}$ the first
  hitting chain associated with these milestones. Then
  $\{Y_n:n\in\NN_0\}$ is a Markov chain with transition probability
  kernel 
  \begin{equation}
    \label{eq:16}
    \nu(x,B) = \PP^x(X(\tau_1)\in B) = \PP^x(Y_1\in B),
  \end{equation}
  where $ \PP^x$ denotes the probability conditional on $Y_0=x$.
\end{proposition}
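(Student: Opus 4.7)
The plan is to reduce the proposition to a direct invocation of the strong Markov property of the underlying diffusion $X$. First I will check that each $\tau_n$ is a stopping time with respect to the natural filtration $\{\FF_t\}_{t\ge 0}$ generated by $X$. The base case $\tau_0 = H^+_{\cup_i M_i}(0)$ is a first entry time into a closed set, hence an $\{\FF_t\}$-stopping time. For the induction step, given that $\tau_{n-1}$ is a stopping time, the index $\xi_{n-1}$ is $\FF_{\tau_{n-1}}$-measurable because it is determined by $X(\tau_{n-1}) = Y_{n-1}$: the milestones $M_i = \partial A_i$ are pairwise disjoint by the strict nesting of the $A_i$ with smooth boundaries, so there is a well-defined map $x \mapsto i(x)$ on $\cup_i M_i$. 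Consequently $\tau_n = \inf\{t\ge \tau_{n-1} : X(t) \in \cup_{j \ne \xi_{n-1}} M_j\}$ is the first entry time into a closed set after $\tau_{n-1}$ and hence also a stopping time.

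Next I will apply the strong Markov property at $\tau_n$: conditional on $\FF_{\tau_n}$, the shifted process $\widetilde X_s := X(\tau_n + s)$ is a diffusion with infinitesimal generator $\Lg$ starting from $\widetilde X_0 = Y_n$. Crucially, $Y_{n+1}$ can be written as a fixed measurable functional $\Phi$ of $\widetilde X$, namely $\Phi(\widetilde X) = \widetilde X_{\widetilde \tau_1}$ where $\widetilde \tau_1$ is the first time $\widetilde X$ hits a milestone distinct from the one containing $Y_n$ — and the dependence of $\Phi$ on $Y_n$ is only through the identification of that starting milestone. Taking conditional expectations yields
\begin{equation*}
\PP\!\left(Y_{n+1}\in B \,\big|\, \FF_{\tau_n}\right) = \PP^{Y_n}\!\left(X(\tau_1)\in B\right) = \nu(Y_n, B),
\end{equation*}
which is precisely the Markov property with kernel $\nu$ as in~\eqref{eq:16}.

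The only mildly delicate point is the disjointness of the milestones, which is what makes $\xi_{n-1}$ a function of $Y_{n-1}$ alone and hence makes the ``leave the current milestone'' hitting time well-defined as a functional of the post-$\tau_{n-1}$ trajectory. Under the standing strict nesting $A_0 \subsetneq A_1 \subsetneq \cdots \subsetneq A_N$ with smooth boundaries — and \emph{a fortiori} in the optimal milestoning case where the $M_i$ are isocommittor level sets at distinct values $z_0 > z_1 > \cdots > z_N$ — this is automatic. Beyond this observation, the argument is a textbook application of the strong Markov property to a recursive sequence of first-entry stopping times, and I do not foresee any further obstacle.
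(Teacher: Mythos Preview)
Your proposal is correct and takes essentially the same approach as the paper: the paper does not give a detailed proof, stating only that the proposition ``immediately follows from strong Markovianity,'' and your argument is precisely the standard spelling-out of that claim via the strong Markov property applied at the recursively defined stopping times $\tau_n$. Your additional care in verifying that the $\tau_n$ are stopping times and that $\xi_{n-1}$ is a function of $Y_{n-1}$ alone (via disjointness of the milestones) fills in details the paper leaves implicit.
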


Note that, by construction, $\nu(x,B)=0$ if $x\in M_i$ and
$B \subset M_i$ since $Y_0\in M_i$ implies that $Y_1 \not\in M_i$. We
will discuss in Sec.~\ref{sec:accelerate} how to sample $\nu(x,\cdot)$
in accelerated ways. Note also that, unlike $\{Y_n:n\in\NN_0\}$, the
index chain $\{\xi_n:n\in\NN_0\}$ is not Markov, in general. What we
show next is that we can compute the MFPT exactly if we allow
ourselves to use $\{Y_n:n\in\NN_0\}$ and the sequence of jump times
$\{\tau_n:n\in\NN_0\}$. This will also help us understand which
property we need to require from the milestones in order that this
exact computation be possible with $\{(\xi_n,\tau_n):n\in\NN_0\}$
instead.

The first hitting chain inherits ergodicity properties of the
original process $X$. We state this result as:

\begin{lemma}
  Assume that the process $X$ is positive recurrent. Then the first
  hitting chain $\{Y_n:n\in\NN_0\}$ is positive recurrent as well, and
  its invariant measure $\mu$ satisfies
\begin{equation}  \label{eq:3}
  \mu(\cdot) = \int_{\cup_i M_i} \mu(\mathrm{d} x) \nu(x, \cdot).
\end{equation}
\end{lemma}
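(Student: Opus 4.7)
The plan is to inherit both positive recurrence and the stated invariance equation for $\{Y_n:n\in\NN_0\}$ from the corresponding properties of the underlying diffusion $X$, using the strong Markov property already packaged in Proposition~\ref{th:firsthitmarkov}. Once Harris-type positive recurrence of $\{Y_n\}$ is in hand, the relation $\mu(\cdot) = \int \mu(\mathrm{d} x) \nu(x, \cdot)$ will follow from a standard Palm/cycle construction of the invariant measure for a positive recurrent Markov chain.

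First I would verify that all jump times $\tau_n$ are a.s.\ finite and that $\{Y_n\}$ visits each milestone $M_i$ infinitely often. Since $X$ is positive recurrent with invariant density $\rho>0$, it a.s.\ hits every nonempty open set in $\RR^d$ infinitely often with finite mean time. Applied to thin open collars on either side of each $M_i$, this forces the successive crossings of $\cup_i M_i$ to form a sequence $\tau_n \uparrow \infty$ and each $M_i$ to be hit infinitely often. To upgrade to discrete-time positive Harris recurrence of $\{Y_n\}$, I would fix a reference milestone, say $M_0$, and a bounded smooth piece $C \subset M_0$. Uniform ellipticity of $\Lg$ gives $\nu(x,\cdot)$ a suitable positive smooth density via standard hitting-distribution estimates on the region between adjacent milestones, so that $C$ plays the role of a Harris small set. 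The finiteness of the mean return index $\EE^x[R_C]$, with $R_C = \inf\{n \geq 1 : Y_n \in C\}$, would then be deduced from $\EE^x[H^+_C(0)] < \infty$ (inherited from positive recurrence of $X$) together with a uniform lower bound $\EE^y[\alpha_1] \geq \delta > 0$ for $y \in \cup_i M_i$, obtained by a maximum-principle argument applied to the mean exit time PDE between two adjacent milestones.

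With Harris recurrence in hand, I would define the invariant measure by the cycle formula
\begin{equation*}
\mu(B) \;:=\; \int_C \lambda(\mathrm{d} x)\, \EE^x\!\biggl[\sum_{k=0}^{R_C-1} \mathbf{1}_B(Y_k)\biggr],\qquad B \subset \cup_i M_i,
\end{equation*}
for any reference probability measure $\lambda$ on $C$. This $\mu$ is finite by the return-time bound above and non-trivial since $\mu(C) \geq \lambda(C) > 0$. The invariance equation then follows by applying the strong Markov property at time $1$ and shifting the summation index: one has $\sum_{k=0}^{R_C-1} \mathbf{1}_B(Y_{k+1}) = \sum_{k=1}^{R_C} \mathbf{1}_B(Y_k)$, and the boundary terms at $k=0$ and $k=R_C$ both lie in $C$ and contribute identically after averaging against $\lambda$, yielding $\int \mu(\mathrm{d} x)\nu(x,B) = \mu(B)$.

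The main obstacle lies in the translation from continuous-time recurrence of $X$ to discrete-time recurrence of $\{Y_n\}$. A priori the chain could perform many rapid oscillations between adjacent milestones, making $R_C$ large even when the elapsed time $\tau_{R_C}$ is short. Ruling this out requires the quantitative lower bound $\EE^{y}[\alpha_1] \geq \delta > 0$ uniformly in $y$, so that $\EE^x[R_C] \leq \delta^{-1}\EE^x[\tau_{R_C}] \leq \delta^{-1} \EE^x[H^+_C(0)] < \infty$. Establishing this bound cleanly in the non-compact setting of $\RR^d$, using only positive recurrence of $X$ together with the smoothness of the milestones and the uniform ellipticity of $\Lg$, is the only nontrivial piece; the rest of the argument is the standard Palm cycle construction.
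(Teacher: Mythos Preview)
The paper does not actually prove this lemma: it is stated and immediately used, with no argument given either in the main text or in the appendices. So there is no ``paper's proof'' to compare against; you are supplying one where the authors left a gap.

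Your strategy via Harris recurrence and the cycle (Palm) formula is a standard and sound route, and you have correctly isolated the one genuine difficulty: passing from positive recurrence of the continuous-time process $X$ to positive recurrence of the discrete-time chain $\{Y_n\}$ hinges on a uniform lower bound $\EE^y[\alpha_1]\ge\delta>0$. Under the bare hypotheses of Definition~\ref{def:milestones} (nested connected sets with smooth boundaries in $\RR^d$, no compactness assumed) this bound need not hold: nothing prevents two adjacent milestones from approaching each other at infinity, in which case $\EE^y[\alpha_1]$ can be made arbitrarily small by sending $y$ off to infinity along $M_i$. Your maximum-principle argument for the mean exit time PDE gives a local lower bound depending on the distance between milestones, but not a global one without an additional assumption such as compactness of the $M_i$ or a uniform separation $\mathrm{dist}(M_i,M_{i\pm1})\ge c>0$. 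The paper tacitly works under such assumptions later (the isocommittor milestones between bounded sets $A,B$ are typically compact), so in practice the obstacle disappears, but as stated the lemma is slightly stronger than what your argument delivers.

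Two minor points. First, your small-set verification also implicitly uses compactness: to get a uniform minorization $\nu(x,\cdot)\ge\epsilon\,\lambda(\cdot)$ for all $x$ in some $C\subset M_0$ you need uniform-in-$x$ lower bounds on the harmonic measure, which again come from ellipticity plus compactness of the relevant region. Second, once positive recurrence is established, the displayed invariance equation is simply the definition of an invariant measure for the kernel $\nu$, so that part needs no separate argument beyond existence.
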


Note that the invariant measure~$\mu$ is supported on the union of the
milestones $\cup_i M_i$, and so it can be decomposed as
\begin{equation}
  \label{eq:1}
  \mu(\cdot) = \sum_{i\in I} \pi_i \mu_i(\cdot),
\end{equation}
where $\mu_i(\cdot)$ is supported on $M_i$ and normalized so that
$\mu_i(M_i)=1$ by introducing
\begin{equation}
  \label{eq:2}
  \pi_i = \mu(M_i), \qquad \sum_{i\in I} \pi_i = 1,
\end{equation}

The distribution $\pi_i$ gives the invariant probability distribution
of the index chain $\{\xi_n:n\in\NN_0\}$, and it is easy to derive an
exact equation for it.  To see how, start by decomposing
\begin{equation}
  \label{eq:5}
  \nu(x,dy) = \sum_{j\sim i} P_{i,j}(x) \nu_{i,j}(x, \mathrm{d} y),\qquad x\in M_i,\quad
  i\in I
\end{equation}
where $j\sim i$ denote the indices of the milestones adjacent to
$M_i$ (that is, $j=i+1$ and $j=i-1$ if $i=1,\ldots, N-1$, $j=1$ if
$i=0$, and $j=N-1$ if $i=N$), $P_{i,j}(x)$ is the conditional probability that, if
$x\in M_i$, the next milestone to be reached will be $M_j$, i.e.
\begin{equation}
  \label{eq:12}
    P_{i,j}(x) = \PP^x(Y_1 \in M_j)= \nu(x,M_j), \qquad x\in M_i, \quad
    i\in I,
\end{equation}
and $\nu_{i,j}(x, \cdot)$ is the transition probability kernel from
$x\in M_i$ to $M_j$, conditional on hitting $M_j$ next,
i.e., $\nu_{i,j}(x, M_j)=1$ for all $x\in
M_i$. 
Note that $P_{i,j}(x)$ does depend on $x$ in general (rather than only
on $M_i \ni x$).

Using the decompositions~\eqref{eq:1} and~\eqref{eq:5} in~\eqref{eq:3}
we obtain
\begin{equation}
  \label{eq:7}
  \sum_{k\in I}\pi_k\mu_k(\cdot) =  \sum_{j\in I} \pi_j \int_{M_j}
  \mu_j(\mathrm{d} x) \sum_{k\sim j} P_{j,k}(x) \nu_{j,k}(x,\cdot)
\end{equation}
If we evaluate this equation on $M_{i}$, we arrive at the desired
equation for $\pi_i$, a result we summarize as:
\begin{proposition}\label{th:pi}
  The invariant distribution $\pi_i$ of the index chain
  $\{\xi_n:n\in\NN_0\}$ is the solution to
  \begin{equation}
    \label{eq:8b}
    \pi_{i} =  \sum_{ j\sim i} \pi_j p_{j,i}, 
  \end{equation}
where
\begin{equation}
  \label{eq:8}
  p_{j,i} =\int_{M_j}\mu_j(\mathrm{d} x) P_{j,i}(x)
\end{equation}
\end{proposition}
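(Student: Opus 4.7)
The plan is to derive \eqref{eq:8b} by testing the already-derived identity \eqref{eq:7} against the milestone $M_i$, so the remaining work is essentially bookkeeping on the supports of the conditional kernels $\mu_k$ and $\nu_{j,k}(x,\cdot)$ appearing in the decompositions \eqref{eq:1} and \eqref{eq:5}.

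First I would evaluate the left-hand side of \eqref{eq:7} on $M_i$. Because the sets $A_0\subset A_1\subset\cdots\subset A_N$ of Definition \ref{def:milestones} are strictly nested with smooth boundaries, the milestones $M_i=\partial A_i$ are pairwise disjoint. Since $\mu_k$ is supported on $M_k$ and normalized so that $\mu_k(M_k)=1$, only the term $k=i$ in $\sum_k \pi_k\mu_k(M_i)$ survives, leaving exactly $\pi_i$.

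Next I would evaluate the right-hand side of \eqref{eq:7} on $M_i$. The key observation is that $\nu_{j,k}(x,\cdot)$, being the transition kernel conditioned on $M_k$ being hit next, is supported on $M_k$ with $\nu_{j,k}(x,M_k)=1$. Combined with disjointness of the milestones, this yields $\nu_{j,k}(x,M_i)=\delta_{k,i}$, which collapses the inner sum over $k\sim j$ to the single term $k=i$ and correspondingly restricts the outer sum to those $j$ with $j\sim i$. Using the definition \eqref{eq:8} of $p_{j,i}$, the right-hand side then becomes $\sum_{j\sim i}\pi_j p_{j,i}$, and equating the two sides gives \eqref{eq:8b}.

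There is no real obstacle in this argument: the substantive work has already been packaged into the preceding lemma establishing existence of the invariant measure $\mu$ for $\{Y_n\}$ (which itself rests on Proposition \ref{th:firsthitmarkov}) and into the decompositions \eqref{eq:1} and \eqref{eq:5}; the present proposition is then a direct projection of \eqref{eq:3} onto the index set $I$. The only subtlety worth flagging explicitly is interpretive: although $\{\xi_n\}$ is not Markov in general, the quantity $\pi_i=\mu(M_i)$ is still the stationary fraction of time that $\xi_n$ equals $i$, by the ergodic theorem applied to $\{Y_n\}$ together with the identity $\{\xi_n=i\}=\{Y_n\in M_i\}$, which justifies calling it the invariant distribution of the index chain.
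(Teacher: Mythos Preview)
Your argument is correct and is exactly the approach the paper takes: the paper derives \eqref{eq:7} from \eqref{eq:1}, \eqref{eq:5}, and \eqref{eq:3}, and then simply says ``If we evaluate this equation on $M_i$, we arrive at the desired equation for $\pi_i$,'' which is precisely the bookkeeping you spell out. Your additional remark clarifying why $\pi_i$ deserves to be called the invariant distribution of the (generally non-Markov) index chain is a useful gloss that the paper leaves implicit.
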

As we will see in Sec.~\ref{sec:accelerate} we can sample $p_{i,j}$
directly (i.e. without having to evaluate $\nu(x,\cdot)$ or even
$P_{i,j}(x)$ beforehand) in accelerated ways.

Next we use these relations to calculate the MFPT from any
$x \in \cup_{i\not=j} M_i$ to $M_j$. Denoting this MFPT by $T_j(x)$,
it is defined as
\begin{definition}\label{def:Tj}
Given a set of milestones $\mathcal{M} = \{M_i:i\in I\}$ as
  Definition~\ref{def:milestones}, the \emph{mean first passage time
  (MFPT)} from $x\in \cup_{i\in I} M_i$ to $M_j$ is given by:
  \begin{displaymath}
    T_j(x) = \EE\, t_j(x),\qquad t_j(x) = \inf\{ t: X(t) \in M_j, X(0) =x \in \cup_{i\in I} M_i\}.
  \end{displaymath}
\end{definition}
We have

\begin{proposition}
  The MFPT $T_j(x)$ satisfies the equation
  \begin{equation}
    \label{eq:4}
    \tau(x) = T_j(x) - \int_{\cup_{i\in I}M_i} \nu(x,\mathrm{d} y) T_j(y), \qquad x \in
    \cup_{i\not=j} M_j,
  \end{equation}
  with the boundary condition $T_j(x)=0$ if~$x\in M_j$. Here $\tau(x)$
  denotes the average time the first hitting chain remains assigned to
  a milestone after hitting this milestone at location~$x$; in the
  notation of Definition~\ref{def:skeleton}, it is
  \begin{equation}
    \label{eq:11}
    \tau(x) = \EE (\alpha_1 | Y_0 = x).
  \end{equation}
\end{proposition}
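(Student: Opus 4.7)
The plan is a standard first-step analysis for the first hitting chain, justified by the strong Markov property of the underlying diffusion $X$ at the stopping time $\tau_1$.

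Fix $j\in I$ and $x\in \cup_{i\neq j}M_i$, and start the diffusion at $X(0)=x$. By the definition of the coarse-grained milestoning chain (Definition~\ref{def:skeleton}), $\tau_1=\alpha_1$ is the first time after $0$ at which $X$ hits a milestone different from the one containing~$x$, and $Y_1=X(\tau_1)$ lies on that milestone. The first step is to decompose the hitting time $t_j(x)$ of $M_j$ pathwise. On the event $\{Y_1\in M_j\}$ we have $t_j(x)=\tau_1$, so the time left after $\tau_1$ is zero, which is consistent with the boundary condition $T_j(Y_1)=0$. On the complementary event $\{Y_1\notin M_j\}$, the trajectory still has to reach $M_j$, and by the strong Markov property of $X$ at the stopping time $\tau_1$, the post-$\tau_1$ process is distributed as the diffusion started afresh from $Y_1$. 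In particular, the remaining time to hit $M_j$ is an independent copy (conditional on $Y_1$) of $t_j(Y_1)$ in the notation of Definition~\ref{def:Tj}. Thus, almost surely,
\begin{equation*}
t_j(x)=\tau_1+t_j(Y_1)\mathbf{1}_{Y_1\notin M_j},
\end{equation*}
with the convention that the second term is zero when $Y_1\in M_j$.

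The second step is to take expectations under $\PP^x$. By definition, $\EE^x\tau_1=\EE(\alpha_1\mid Y_0=x)=\tau(x)$, and by the tower property together with the strong Markov property,
\begin{equation*}
\EE^x\bigl[t_j(Y_1)\mathbf{1}_{Y_1\notin M_j}\bigr]=\EE^x\bigl[T_j(Y_1)\mathbf{1}_{Y_1\notin M_j}\bigr].
\end{equation*}
Using the boundary condition $T_j\equiv 0$ on $M_j$ we can drop the indicator and write this expectation as $\int_{\cup_i M_i}\nu(x,\mathrm{d} y)T_j(y)$, by Proposition~\ref{th:firsthitmarkov} which identifies $\nu(x,\cdot)$ as the law of $Y_1$ given $Y_0=x$. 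Rearranging yields exactly~\eqref{eq:4}. The boundary condition $T_j(x)=0$ for $x\in M_j$ is immediate from Definition~\ref{def:Tj}.

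The only non-routine point is to justify the strong Markov decomposition rigorously: one needs that $\tau_1$ is a stopping time with $\EE^x\tau_1<\infty$ (so that the decomposition is integrable), and that $T_j$ is a measurable function on $\cup_i M_i$ so that $T_j(Y_1)$ makes sense. Finiteness of $\EE^x\tau_1$ follows from positive recurrence of $X$ together with uniform ellipticity of $\Lg$ on bounded neighborhoods of the milestones, and measurability of $T_j$ follows from standard Feller-type considerations for hitting times of smooth sets. These are the only technical points; once they are in place the rest is the first-step identity above.
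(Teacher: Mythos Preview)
Your proposal is correct and matches the approach the paper itself indicates: the paper does not give a proof of this proposition, stating only that it follows ``using strong Markovianity and time homogeneity of the process,'' and that it is similar to the proof of Lemma~\ref{lemma:calculate MFPT}. Your first-step decomposition at the stopping time $\tau_1$, followed by an application of the strong Markov property and the tower property, is exactly this argument carried out in detail; the technical caveats you flag (finiteness of $\EE^x\tau_1$ and measurability of $T_j$) are the right ones to note.
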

We will skip the proof of this proposition, as it is similar to the
proof of Lemma~\ref{lemma:calculate MFPT} (a rigorous version of
\eqref{eq:6}) below, using strong Markovianity and time homogeneity of
the process.

Equation~\eqref{eq:4} is exact, but it obviously requires more
information than~\eqref{eq:Ti0}, which is a closed equation for
\begin{equation}
  \label{eq:9}
  T_{i,j} = \int_{M_i} \mu_i(dx) T_j(x).
\end{equation}
It is clear that~\eqref{eq:Ti0} cannot be derived from~\eqref{eq:4}
without additional assumptions. Suppose, however, that the following
property holds (this will be made more precise below in
Definition~\ref{Def IFD}):
\begin{equation}
  \label{eq:14}
  \begin{aligned}
    & \frac{\int_{M_i} \mu_i(dx) \nu(x,B_j)}{\int_{M_i} \mu_i(dx)
      \nu(x,M_j)} = \mu_j(B_j),\qquad \forall B_j \subset M_j, \ \ \forall
    i\sim j \\
    \Leftrightarrow \qquad& \frac{\int_{M_i} \mu_i(dx)
      \nu(x,B_j)}{\int_{M_i} \mu_i(dx)
      P_{i,j}(x)} = \mu_j(B_j),\qquad \forall B_j \subset M_j, \ \ \forall i\sim j \\
    \Leftrightarrow \qquad& \int_{M_i} \mu_{i}(dx) P_{i,j}(x)
    \nu_{i,j}(x,\cdot) = p_{i,j} \mu_j(\cdot), \qquad \forall
    i\sim j.
  \end{aligned}
\end{equation}
Intuitively, the above property means that, conditioned on hitting
$M_j$, the push forward of the distribution $\mu_i$ by the transition
kernel $\nu$ of the first hitting chain is given by
$\mu_j$. 
Assuming that~\eqref{eq:14} holds, we can average~\eqref{eq:4} with
respect to $\mu_i(\cdot)$. Since
$\int_{M_i} \mu_i(\mathrm{d}x) \tau(x) = t_i$ (defined in
\eqref{eq:13}) due to ergodicity, this gives
\begin{equation}
  \label{eq:6}
  \begin{aligned}
    t_i &= T_{i,j} - \int_{M_i} \mu_i(\mathrm{d}x) \sum_{k\sim i}
    P_{i,k}(x) \int_{M_k}\nu_{i,k}(x, \mathrm{d}y) T_j(y) &
    \\
    &= T_{i,j} -    \sum_{k\sim i} p_{i,k} \int_{M_k} \mu_k(\mathrm{d}y) T_j(y)
    \qquad & 
    \text{(by~\eqref{eq:14})}\\
    &= T_{i,j} - \sum_{k\sim i} p_{i,k}T_{k,j}, &
  \end{aligned}
\end{equation}
which is precisely~\eqref{eq:Ti0}. In order words, \eqref{eq:Ti0} is
exact if it is associated with milestones such that~\eqref{eq:14}
holds.  Theorem~\ref{IFD in Diffusion} below establishes that such
milestones do indeed exists.

We remark that the exact milestoning \cite{BRE15} permits via solution
of~\eqref{eq:4} to obtain the MFPTs $T_j(x)$ for any set of milestones
consistent with Definition~\ref{def:milestones}, but it requires
to sample  both the kernel $\nu(x,\cdot)$ (or at least expectations
such as $\int_{M_k} \nu(x,\dd y) T_j(y)$) and $\tau(x)$. Clearly, it
is computationally more expensive to gather this information than that
entering~\eqref{eq:4} -- this is why optimal milestoning is more
efficient.  Still, it is posible to compute $\nu(x,\cdot)$ and
$\tau(x)$, as explained in Sec.~\ref{sec:parameters}).

\section{Invariant family of distributions}
\label{sec:ifd}

The exact calculation of the MFPTs in optimal milestoning is based on
the existence of invariant families of distributions, for
which~\eqref{eq:14} hold. To introduce them precisely, let us define a
shift operator~${\mathcal{P}}^*$ of a probability measures $\mu$,
associated with the first hitting chain $\{Y_n:n\in\NN\}$ by
\begin{equation}\label{eqn:the shift operator on measure:chap 1}
({\mathcal{P}}^* \mu)(B)=\PP_\mu\bigl[Y_1\in B\bigr], \qquad B \subset
\cup_{i\in I} M_i,
\end{equation}
where we have used $\PP_\mu$ to denote the
law of the diffusion with the initial distribution $\mu$.
Note that the kernel associated with $\mathcal{P}^{\ast}$ is exactly
$\nu(x, \cdot)$, the transition probability kernel of the first
hitting chain used in \eqref{eq:3}, as
\begin{equation}
  (\mathcal{P}^{\ast} \mu)(B) = \int_{\mathcal{M}} \mu(\mathrm{d} x) \nu(x, B).
\end{equation} 
We then have:

\begin{definition}[Invariant family of distributions]\label{Def IFD}
  A set of milestones $\MM=\{M_i:i\in I\}$ is said to have an
  invariant family of distributions if there exists a family of
  probability measures $\{\mu_i:i\in I\}$ with each $\mu_i$
  concentrated on $M_i$ such that the conditional distribution of
  ${\mathcal{P}}^*\mu_i$ given $M_j$, if it makes sense, is $\mu_j$.
  Such family $\{\mu_i:i\in I\}$ is called an invariant family of
  distributions associated with ${\MM}=\{M_i:i\in I\}$.
\end{definition}

The next theorem proves that invariant families of distributions
exist.  Recall that $q^-$ is the backward committor function defined
in \eqref{eq:q-}.  Let us define a set of milestones
${\MM}=\{M_i:0\le i\le N\}$ as
\begin{equation}\label{backward isocommittor surfaces}
  M_i=\{x\in \RR^d\setminus(A\cup B)^\circ : q^-(x)=z_i\}, \quad
  \text{for } i \in I,
\end{equation}
where $1\ge z_0>z_1>\cdots>z_N\ge 0$.  We assume that all the $z_i$'s
are regular values of $q^-$, i.e., the regularity condition
$\lvert\nabla q^-(x)\rvert>0$ holds for every $x\in \mathcal{M}$.  We
also assume that all the surface integrals
\begin{equation}\label{eqn:Zi}
Z_i=\int_{M_i}
\frac{\rho(x)}{\lvert\nabla q^-(x)\rvert}\langle a(x)\nabla
q^-(x),\nabla q^-(x)\rangle
\,\dif \sigma_{M_i}(x), \quad  i\in I
\end{equation}
are finite. Then we are able to define on each $M_i$ a probability
measure $\mu_i$ with the density function
\begin{equation}\label{eq:fhdensity}
\rho_i(x)=Z_i^{-1}\frac{\rho(x)}{|\nabla q^-(x)|}\langle a(x)\nabla
q^-(x),\nabla q^-(x)\rangle, \qquad i\in I
\end{equation}
and we have:

\begin{theorem}\label{IFD in Diffusion}
  Let ${\MM}=\{M_i:i \in I\}$ be a set of milestones made of the
  backward isocommittor surfaces as in \eqref{backward isocommittor
    surfaces} satisfying the regular condition
  $\lvert\nabla q^-(x)\rvert>0$ for every $x\in M$.  Then the family
  of probability measures $\{\mu_i:i\in I\}$ with density $\rho_i$ defined in
  (\ref{eq:fhdensity}) is an invariant family of distributions
  associated with ${\MM}$.  Actually we have
\begin{equation}\label{IFD expansion}
{\mathcal{P}}^* \mu_i=\sum_{j\in I} q_{i,j}\mu_j,  \quad   \text{for
}\ i\in I
\end{equation}
where $q_{i,j}$ is given by
\begin{equation}\label{transition probability 2}
q_{i,j}=\PP_{\mu_i}\bigl[\xi_1=j\bigr]=
\begin{cases}
\displaystyle\frac{z_{i}-z_{i+1}}{z_{i-1}-z_{i+1}}   &\quad \text{if } j=i-1,\\
\displaystyle\frac{z_{i-1}-z_{i}}{z_{i-1}-z_{i+1}} &\quad \text{if } j=i+1,\\
0  &\quad\text{otherwise}.
\end{cases}
\end{equation}
where we set $z_{-1}=-\infty$ and $z_{N+1}=+\infty$ so that
$q_{0,1}=q_{N, N-1} =1$.
\end{theorem}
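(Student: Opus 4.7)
The plan is to reformulate the invariance identity ${\mathcal{P}}^*\mu_i = q_{i,i-1}\mu_{i-1} + q_{i,i+1}\mu_{i+1}$ as an integral equality tested against arbitrary bounded measurable $u$ on $M_{i-1}\cup M_{i+1}$. By the strong Markov property of Proposition~\ref{th:firsthitmarkov} and standard elliptic regularity for $\Lg$, the quantity $({\mathcal{P}}u)(x)=\EE^x[u(Y_1)]$ for $x\in M_i$ coincides with the value at $x$ of the unique bounded solution of $\Lg v=0$ in the annular region $D_i=\{z_{i+1}<q^-<z_{i-1}\}$ with boundary values $v|_{M_{i\pm1}}=u|_{M_{i\pm1}}$. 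Dropping the distinction between $u$ and its $\Lg$-harmonic extension, it then suffices to show that
\begin{equation*}
\int_{M_i} u\,\rho_i\,\dif\sigma \;=\; q_{i,i-1}\int_{M_{i-1}} u\,\rho_{i-1}\,\dif\sigma \;+\; q_{i,i+1}\int_{M_{i+1}} u\,\rho_{i+1}\,\dif\sigma
\end{equation*}
for every $\Lg$-harmonic function $u$ in $D_i$, with the coefficients given by~\eqref{transition probability 2}.

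My central computational device would be the pair of vector fields
\begin{equation*}
V \;=\; u\rho\,a\nabla q^- \;-\; q^-\rho\,a\nabla u \;+\; u q^- J, \qquad W \;=\; \rho\,a\nabla u \;-\; uJ,
\end{equation*}
where $J = a\nabla\rho - b\rho$ is the stationary probability current. A direct differentiation, exploiting the three structural identities $\nabla\cdot(\rho a\nabla u)=J\cdot\nabla u$ (from $\Lg u=0$), $\nabla\cdot(\rho a\nabla q^-)=-J\cdot\nabla q^-$ (from $\Lg^\dag q^-=0$), and $\nabla\cdot J=0$ (from $\Lg^*\rho=0$), shows that both $V$ and $W$ are divergence-free in $D_i$. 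On a milestone $M_j$ (where $q^-=z_j$), the normal component of $V$ factors cleanly as $V\cdot\hat n = u\rho(a\nabla q^-)\cdot\hat n - z_j\, W\cdot\hat n$ with $\hat n = \nabla q^-/\lvert\nabla q^-\rvert$, and~\eqref{eq:fhdensity} gives $\rho(a\nabla q^-)\cdot\hat n = Z_j\rho_j$.

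Applying the divergence theorem to $V$ and to $W$ on the two sub-annuli $D_i^-=\{z_{i+1}<q^-<z_i\}$ and $D_i^+=\{z_i<q^-<z_{i-1}\}$ yields that the fluxes of $V$ and of $W$ through $M_{i-1}$, $M_i$ and $M_{i+1}$ all coincide. Writing $\langle u\rangle_j = \int_{M_j} u\,\rho_j\,\dif\sigma$ and letting $B$ be the common value of $\int_{M_j} W\cdot\hat n\,\dif\sigma$, the $V$-flux identity reads $Z_j\langle u\rangle_j - z_j B = A$ for $j\in\{i-1,i,i+1\}$ with a single constant $A$. In parallel, I would show that the normalizations $Z_j$ are $j$-independent: the field $\rho a\nabla q^-+q^-J$ is itself divergence-free (same two structural identities), its flux through $M_j$ equals $Z_j+z_j\int_{M_j} J\cdot\hat n\,\dif\sigma$, and the $J$-flux vanishes since $\nabla\cdot J=0$ on the compact set $A_j$ with boundary $M_j=\partial A_j$; a nested application of the divergence theorem to $\rho a\nabla q^-+q^-J$ then forces $Z_j\equiv Z$. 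Thus $\langle u\rangle_j = (A+z_jB)/Z$ is affine in $z_j$.

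The proof then closes because the convex-combination identities $z_i = q_{i,i-1}z_{i-1}+q_{i,i+1}z_{i+1}$ and $q_{i,i-1}+q_{i,i+1}=1$, both immediate from~\eqref{transition probability 2}, force $\langle u\rangle_i = q_{i,i-1}\langle u\rangle_{i-1}+q_{i,i+1}\langle u\rangle_{i+1}$, which is the desired equality. The main obstacle I anticipate is the structural design of the vector fields: in the non-reversible setting the probability current $J$ need not vanish, and the terms $uq^-J$ in $V$ together with the correction $q^-J$ in the auxiliary field $\rho a\nabla q^-+q^-J$ are exactly what is needed to close the algebra — the reversible case $J\equiv 0$ hides this structure. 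Beyond this, the remaining work is the routine geometric bookkeeping of signed normals on nested closed surfaces, together with the two endpoint cases $i=0,N$ (one neighbor absent, a single sub-annulus, $z_{-1}=-\infty$, $z_{N+1}=+\infty$), where the smoothness of $u$ across the non-absorbing milestone $M_0$ or $M_N$ yields $B=0$ and the identity degenerates correctly to $q_{0,1}=q_{N,N-1}=1$.
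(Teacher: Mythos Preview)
Your proof is correct and uses the same mathematical ingredients as the paper's: the reduction to $\Lg$-harmonic test functions, the three structural identities coming from $\Lg u=0$, $\Lg^\dag q^-=0$, and $\Lg^*\rho=0$, the equality of the normalizations $Z_j$ (the paper's Lemma~\ref{lemma:Z}), and repeated applications of the divergence theorem on the sub-annuli. The underlying divergence-free objects are literally the same: the paper's key identity $\nabla\cdot(u\rho a\nabla q^-)=\nabla\cdot(q^-\rho a\nabla u + q^- u J)$ is exactly your $\nabla\cdot V=0$ (modulo the sign convention on $J$, where you take $J=a\nabla\rho-b\rho$, opposite to the paper's), and the paper's identity $\nabla\cdot(\rho a\nabla u)=-\nabla\cdot(uJ)$ is your $\nabla\cdot W=0$.

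The organizational difference is worth noting. The paper proceeds by substituting the explicit values of $q_{i,i\pm1}$, multiplying through by $z_{i-1}-z_{i+1}$, and then performing two rounds of regrouping of surface integrals interleaved with two applications of the divergence theorem until everything cancels. Your route is more transparent: by naming $V$ and $W$ and observing that both are divergence-free, you extract directly that $Z\langle u\rangle_j - z_j B$ is constant in $j$, i.e.\ $\langle u\rangle_j$ is \emph{affine} in $z_j$; the theorem then follows in one line from the fact that $(q_{i,i-1},q_{i,i+1})$ are precisely the barycentric weights expressing $z_i$ as a convex combination of $z_{i-1}$ and $z_{i+1}$. This affine-structure viewpoint also handles the endpoint cases $i=0,N$ more cleanly (the harmonicity of $u$ across the terminal milestone forces $B=0$ by the divergence theorem on the enclosed cap, so the affine relation degenerates to $\langle u\rangle_0=\langle u\rangle_1$), whereas the paper absorbs these into the formal convention $z_{-1}=-\infty$, $z_{N+1}=+\infty$ without separate comment.
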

We defer the proof of this theorem to the appendix.  We remark that
the optimal milestones are iso-surfaces of the backward committor
function is due to the assignment of the milestoning index process to
the last milestone the trajectory hit as in
Definition~\ref{def:milestoning jump process}.

\medskip 

Next we give four lemmas that list some properties of the invariant
family of distributions associated with a set of milestones.

The first lemma states that an invariant family of distributions
always forms an invariant distribution for the first hitting
chain. The lemma justifies the construction in \eqref{eq:1} where we
obtain the normalized marginal distribution on $M_i$ through the
decomposition of the invariant measure.

\begin{lemma}\label{lemma:relation between FHD and IFD}
  Let $\{\mu_i:i\in I\}$ be an invariant family of distributions
  associated with a set of milestones ${\MM}=\{M_i:i\in I\}$. Then
  there exists an invariant distribution $\mu$ for the first hitting
  chain $\{Y_n:n\in\NN_0\}$ such that each $\mu_i$ is the conditional
  distribution of $\mu$ given $M_i$.
\end{lemma}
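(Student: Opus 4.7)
The plan is to build $\mu$ as a convex combination $\mu=\sum_{i\in I}\pi_i\mu_i$ with weights $\pi_i$ chosen so that $\mathcal{P}^*\mu=\mu$. First I would introduce, for each $i,j\in I$, the numbers $q_{i,j}:=(\mathcal{P}^*\mu_i)(M_j)$. These are nonnegative, and because $\mathcal{P}^*\mu_i$ is a probability measure supported on $\cup_{k}M_k$ (the first hitting chain always lands on some milestone after one step), each row sums to one: $\sum_{j}q_{i,j}=1$. Hence $Q=(q_{i,j})$ is a stochastic matrix on the finite index set $I$.

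Next I would use the IFD property from Definition \ref{Def IFD}. Whenever $q_{i,j}>0$ it tells us that the conditional law of $\mathcal{P}^*\mu_i$ on $M_j$ is $\mu_j$, so the restriction of $\mathcal{P}^*\mu_i$ to $M_j$ equals $q_{i,j}\mu_j$; when $q_{i,j}=0$ the restriction is the zero measure, which also equals $q_{i,j}\mu_j$. Summing over $j$ gives the key identity
\begin{equation*}
\mathcal{P}^*\mu_i=\sum_{j\in I}q_{i,j}\mu_j,\qquad i\in I.
\end{equation*}
This reduces the problem from the infinite-dimensional space of measures on $\cup_iM_i$ to a finite linear algebra problem on $I$.

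Since $Q$ is a finite stochastic matrix, the standard Perron--Frobenius argument (applied to $Q^T$) yields a probability vector $\pi=(\pi_i)_{i\in I}$ satisfying $\sum_{i}\pi_i q_{i,j}=\pi_j$ for all $j\in I$. Setting $\mu:=\sum_{i\in I}\pi_i\mu_i$ and using the identity above, a direct computation gives
\begin{equation*}
\mathcal{P}^*\mu=\sum_{i}\pi_i\,\mathcal{P}^*\mu_i=\sum_{i,j}\pi_iq_{i,j}\mu_j=\sum_{j}\pi_j\mu_j=\mu,
\end{equation*}
so $\mu$ is invariant for the first hitting chain. Because the milestones $M_i=\partial A_i$ arising from strictly nested $A_0\subsetneq A_1\subsetneq\cdots\subsetneq A_N$ are pairwise disjoint, for any Borel set $B$ we have $\mu(B\cap M_i)=\pi_i\mu_i(B)$, giving $\mu(M_i)=\pi_i$ and hence that the conditional law of $\mu$ on $M_i$ equals $\mu_i$ whenever $\pi_i>0$.

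I do not foresee a genuine obstacle; the only delicate point is the ``if it makes sense'' qualifier in Definition \ref{Def IFD} when some $q_{i,j}$ or $\pi_i$ vanish. For $q_{i,j}=0$ the decomposition of $\mathcal{P}^*\mu_i$ is unaffected because both sides assign zero mass to $M_j$, and for $i$ with $\pi_i=0$ the conditioning of $\mu$ on $M_i$ is vacuous and one is free to declare $\mu_i$ as the representative. These considerations are bookkeeping rather than substantive.
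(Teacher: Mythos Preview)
Your proposal is correct and follows essentially the same approach as the paper: decompose $\mathcal{P}^*\mu_i=\sum_j p_{i,j}\mu_j$ using the IFD property, take an invariant probability vector $(\pi_i)$ for the resulting stochastic matrix, and set $\mu=\sum_i\pi_i\mu_i$. Your version is somewhat more explicit (you invoke Perron--Frobenius for the existence of $\pi$ and discuss the degenerate cases $q_{i,j}=0$ and $\pi_i=0$), but the argument is the same.
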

\begin{proof}
  By assumption and the formula of total probability, we obtain the
  following decomposition of ${\mathcal{P}}^* \mu_i$ for each
  $i\in I$,
$$
{\mathcal{P}}^*\mu_i=\sum_{j\in I} p_{i,j}\mu_j,
$$
where
$p_{i,j}=({\mathcal{P}}^* \mu_i)(M_j)=\PP_{\mu_i}\bigl[\xi_1=j\bigr]$.
It is easy to check that this agrees with $p_{i,j}$ defined in
\eqref{eq:8b}.  Let $(\pi_i)_{i\in I}$ be an invariant distribution for the
transition matrix $(p_{i,j})_{i,j\in I}$. Then it is straightforward to check that
$\mu=\sum_{i\in I}\pi_i\mu_i$ is an invariant distribution for
$\{Y_n:n\in\NN_0\}$. By definition, we have $\mu(M_i) = \pi_i$ and hence
it is also consistent with \eqref{eq:2}.
\end{proof}

\begin{remark}\label{remark:pq}
  While $p_{i,j}$ and $q_{i,j}$ defined in \eqref{transition
    probability 2} are the same quantity, we reserve the notation
  $q_{i,j}$ for the transition probability when the milestones are
  chosen to be backward isocommittor surfaces, and hence the values
  are explicitly known as in \eqref{transition probability 2}. For
  general choice of milestones, we use $p_{i,j}$ instead.
\end{remark}

The second lemma justifies a key assumption made in milestoning for the
set of milestones possessing an invariant family of distributions.

\begin{lemma}\label{lemma:IFD implies assumption ii}
  Suppose that $\{\mu_i:i\in I\}$ is an invariant family of
  distributions associated with a set of milestones
  ${\MM}=\{M_i:i\in I\}$.  Let $\mu$ be the corresponding invariant
  distribution for the first hitting chain, introduced in Lemma
  \ref{lemma:relation between FHD and IFD}.  Then under the law
  $\PP_\mu$, the following properties hold.
\begin{itemize}
\item[(i)] The index chain  $\{\xi_n:n\in\NN_0\}$ is a Markov chain;
\item[(ii)] For any $n\ge 1$ and $i_k\in I$, $0\le k\le n$,
$$\EE_{\mu}\bigl[\alpha_n \mid \xi_k=i_k,0\le k\le n\bigr]=
\EE_{\mu}\bigl[\alpha_1 \mid \xi_0=i_{n-1}, \xi_1=i_n\bigr].$$
\end{itemize}
\end{lemma}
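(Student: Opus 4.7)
The plan is to reduce both assertions to a single conditional-distribution identity which I will call the \emph{key claim}: under $\PP_\mu$, for every $n\ge 0$ and every sequence $i_0,\ldots,i_n\in I$ with $\PP_\mu[\xi_k=i_k,\,0\le k\le n]>0$, the regular conditional law of $Y_n$ given $\{\xi_k=i_k,\,0\le k\le n\}$ is $\mu_{i_n}$. Once this is in hand, part (i) follows from Proposition~\ref{th:firsthitmarkov} combined with the fact that $\xi_{n+1}$ is a deterministic function of $Y_{n+1}$ (the index of the unique milestone containing it), and part (ii) follows from the strong Markov property of $X$ applied at the stopping time $\tau_{n-1}$.

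I would prove the key claim by induction on $n$. For $n=0$, the decomposition $\mu=\sum_{j\in I}\pi_j\mu_j$ from Lemma~\ref{lemma:relation between FHD and IFD} together with $\mathrm{supp}\,\mu_j\subset M_j$ yields immediately that, conditional on $\xi_0=i_0$, $Y_0\sim\mu_{i_0}$. For the inductive step, use Markovianity of $\{Y_n\}$ (Proposition~\ref{th:firsthitmarkov}): the conditional law of $Y_{n+1}$ given $(\xi_0,\ldots,\xi_n)=(i_0,\ldots,i_n)$ is $\mathcal{P}^*\mu_{i_n}$ by the inductive hypothesis. Since $\{\xi_{n+1}=i_{n+1}\}=\{Y_{n+1}\in M_{i_{n+1}}\}$, further conditioning on $\xi_{n+1}=i_{n+1}$ yields the conditional distribution of $\mathcal{P}^*\mu_{i_n}$ on $M_{i_{n+1}}$, which by Definition~\ref{Def IFD} of an invariant family of distributions is precisely $\mu_{i_{n+1}}$. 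This closes the induction.

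Part (i) is then immediate: given $\xi_n=i_n$, $Y_n\sim\mu_{i_n}$, and because $\{Y_n\}$ is Markov and $\xi_{n+1}$ is a function of $Y_{n+1}$, one finds $\PP_\mu[\xi_{n+1}=j\mid\xi_0,\ldots,\xi_n]=\int_{M_{i_n}}\mu_{i_n}(\dd y)\,\nu(y,M_j)=p_{i_n,j}$, which depends only on $i_n$. For part (ii), apply the key claim at level $n-1$: conditional on $(\xi_0,\ldots,\xi_{n-1})=(i_0,\ldots,i_{n-1})$, $Y_{n-1}\sim\mu_{i_{n-1}}$. The strong Markov property of $X$ at $\tau_{n-1}$ implies that, conditional on $Y_{n-1}=y$, the pair $(\alpha_n,Y_n)$ has the same joint law as $(\alpha_1,Y_1)$ under $\PP^y$. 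Taking expectations after further conditioning on $\xi_n=i_n$ and integrating $y$ against $\mu_{i_{n-1}}$ gives
\[
\EE_\mu\bigl[\alpha_n\mid\xi_k=i_k,\,0\le k\le n\bigr]
=\EE_{\mu_{i_{n-1}}}\bigl[\alpha_1\mid\xi_1=i_n\bigr]
=\EE_\mu\bigl[\alpha_1\mid\xi_0=i_{n-1},\xi_1=i_n\bigr],
\]
where the last equality uses the $n=1$ case of the key claim.

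The main technical obstacle is to interpret Definition~\ref{Def IFD} at the level of regular conditional distributions, so that the "conditioning of $\mathcal{P}^*\mu_i$ on $M_j$" step in the inductive argument is well-defined and almost-surely equal to $\mu_j$; the words "if it makes sense" in that definition must be translated into an $a.s.$ statement valid on the $\PP_\mu$-positive events produced by the induction. A minor bookkeeping point is that index sequences $(i_0,\ldots,i_n)$ with $\PP_\mu$-probability zero need not be considered, since the conditional expectations in the lemma are defined only on a set of full measure; restricting to admissible sequences as above takes care of this without extra work.
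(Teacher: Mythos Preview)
Your proof is correct and rests on the same ingredients as the paper's---induction, the strong Markov property of $X$, and the invariant-family hypothesis---but is organized around a different key lemma. The paper runs two separate inductions: for (i) it shows directly that $\PP_\mu[\xi_k=i_k,\,1\le k\le n\mid\xi_0=i_0]=p_{i_0,i_1}\cdots p_{i_{n-1},i_n}$, and for (ii) it shows $\EE_\mu[\alpha_n\one_{\{\xi_k=i_k,\,1\le k\le n\}}\mid\xi_0=i_0]=p_{i_0,i_1}\EE_\mu[\alpha_{n-1}\one_{\{\xi_k=i_{k+1},\,1\le k\le n-1\}}\mid\xi_0=i_1]$, in each case applying strong Markov at $\tau_1$ to peel off the first step. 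Your approach instead isolates the single statement that the conditional law of $Y_n$ given $(\xi_0,\ldots,\xi_n)$ is $\mu_{\xi_n}$, proves it once by induction (adding the last step rather than peeling off the first), and then reads off both (i) and (ii) as corollaries. The two arguments are equivalent in content; yours is somewhat more economical and makes the role of the IFD property more transparent, while the paper's direct computation avoids having to articulate regular conditional distributions and the measurability bookkeeping you flag at the end.
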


The third lemma is related to the exact calculation of the MFPTs and
will be used to justify \eqref{eq:6}:

\begin{lemma}\label{lemma:calculate MFPT}
  Assume that properties $(i)$ and $(ii)$ in Lemma \ref{lemma:IFD
    implies assumption ii} hold.  Let $D$ be the first time step $n$
  such that $\xi_n=j$.  Define
  $h_{i,j}=\EE_\mu\bigl[\tau_D\big|\ \xi_0=i\bigr].$ Then
  $(h_{i,j})_{i,j\in I}$ is the unique solution to the following discrete
  Poisson problem 
\begin{equation}\label{eqn:discrete Poisson problem for traditional milestoning}
\begin{cases}
\displaystyle h_{i,j}=\sum_{k\in I} p_{i,k}t_{i,k}+\sum_{k\in I}
p_{i,k}h_{k,j},   
\qquad &i \in I, \quad i\neq j,\\
h_{j,j}=0,                                  &
\end{cases}
\end{equation}
where $p_{i,j}=\PP_{\mu}\bigl[\xi_1=j\big|\ \xi_0=i\bigr]$ and
$t_{i,j}=\EE_{\mu}\bigl[\alpha_1\big|\ \xi_0=i, \xi_1=j\bigr]$.
\end{lemma}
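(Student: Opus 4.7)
The plan is to derive the recursion by conditioning on the first index transition $\xi_1$, then invoke the Markov property (i) together with property (ii) to match the shifted expectation of $\tau_D - \tau_1$ with $h_{k,j}$. Uniqueness of the solution will then follow from a standard spectral argument on the reduced sub-stochastic matrix, using positive recurrence of the index chain.

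First I would dispatch the boundary case. Since $\mu$ is supported on $\cup_i M_i$, under $\PP_\mu$ we have $X(0)\in \cup_i M_i$ and hence $\tau_0 = 0$; consequently, if $\xi_0 = j$ then $D = 0$ and $\tau_D = 0$, giving $h_{j,j}=0$. For $i \neq j$, we have $D\ge 1$ and $\tau_D = \alpha_1 + (\tau_D - \tau_1)$. Writing out the conditional expectation and using property (i) to compute $\PP_\mu[\xi_1 = k \mid \xi_0 = i] = p_{i,k}$, I obtain
\begin{equation*}
h_{i,j} = \sum_{k \in I} p_{i,k}\, \EE_\mu\bigl[\alpha_1 \bigm| \xi_0 = i,\, \xi_1 = k\bigr] + \sum_{k \in I} p_{i,k}\, \EE_\mu\bigl[\tau_D - \tau_1 \bigm| \xi_0 = i,\, \xi_1 = k\bigr].
\end{equation*}
The first term equals $\sum_k p_{i,k} t_{i,k}$ directly from the definition of $t_{i,k}$.

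The crux is to identify the second term with $\sum_k p_{i,k} h_{k,j}$. Here I would argue that, under $\PP_\mu$, conditioning on $\xi_1 = k$ yields a shifted index chain $\{\xi_{n+1}\}_{n \ge 0}$ whose joint law with the shifted lags $\{\alpha_{n+1}\}_{n \ge 1}$ matches the joint law of $\{\xi_n\}_{n \ge 0}$ and $\{\alpha_n\}_{n \ge 1}$ started from $\xi_0 = k$. This would follow by combining property (i), which supplies the Markov property of the indices with transition matrix $(p_{i,k})$, with property (ii), which makes the conditional expectation of each lag given the full chain depend only on the surrounding pair of indices. Hence the conditional law of $\tau_D - \tau_1$ given $\{\xi_0=i,\,\xi_1 = k\}$ coincides with the unconditional law of $\tau_{D'}$ for the chain started from $\xi_0 = k$, with $D' = \inf\{n \ge 0 : \xi_n = j\}$, and the recursion in \eqref{eqn:discrete Poisson problem for traditional milestoning} drops out.

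For uniqueness, I would recast the system for fixed $j$ as $(\id - P_j)\, h_{\cdot,j} = f$, where $P_j = (p_{i,k})_{i,k \neq j}$ is the sub-stochastic matrix obtained by deleting row and column $j$ from the index transition matrix, and $f_i = \sum_k p_{i,k} t_{i,k}$. Positive recurrence of the diffusion together with Lemma~\ref{lemma:relation between FHD and IFD} implies that the index chain reaches $j$ in finitely many steps almost surely from any starting index, so the spectral radius of $P_j$ is strictly less than one and $\id - P_j$ is invertible. The main technical obstacle is the shift step: $\{\xi_n\}$ by itself is not rich enough to be strong Markov on its own (the lags retain memory of the exact hitting positions $Y_n$), so the identification of the shifted chain's law with that of a fresh chain needs \emph{both} properties (i) and (ii), plus careful accounting of what conditioning means under the stationary measure $\mu$ supplied by Lemma~\ref{lemma:relation between FHD and IFD}.
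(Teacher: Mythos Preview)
Your first-step decomposition and the uniqueness argument via invertibility of $\id-P_j$ are both sound; the latter is a clean alternative to the paper's route, which instead shows that $(h_{i,j})_i$ is the \emph{minimal} nonnegative solution by iterated substitution. The gap is in the shift step.

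You claim that, conditioned on $\{\xi_0=i,\xi_1=k\}$, the joint \emph{law} of the shifted indices and lags $\{(\xi_{n+1},\alpha_{n+1})\}_{n\ge 0}$ matches that of $\{(\xi_n,\alpha_n)\}_{n\ge 0}$ started from $\xi_0=k$, and hence that the conditional law of $\tau_D-\tau_1$ matches the law of $\tau_{D'}$. Property (ii) does not give this: it controls only the conditional \emph{expectation} of each $\alpha_n$ given $(\xi_0,\dots,\xi_n)$, not its conditional distribution, and certainly not the joint distribution of all the lags. Since $h_{k,j}$ is an expectation you only need equality of expectations, but $\tau_D-\tau_1$ is a random sum with a random number of terms, so you cannot apply (ii) in one stroke. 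To make the identification $\EE_\mu[\tau_D-\tau_1\mid\xi_0=i,\xi_1=k]=h_{k,j}$ rigorous you must expand
\[
\tau_D-\tau_1=\sum_{n\ge 2}\alpha_n\one_{\{D\ge n\}},
\]
use (ii) inside each term to replace $\alpha_n$ by the $\sigma(\xi_{n-1},\xi_n)$-measurable quantity $t_{\xi_{n-1},\xi_n}$, and only then invoke the Markov property (i) of $\{\xi_n\}$ to shift the resulting functional of the index chain by one step. That is precisely the paper's computation: it writes $h_i=\sum_{n\ge 1}\EE_{\mu_i}[\one_{\{D\ge n\}}t_{\xi_{n-1},\xi_n}]$ first and then peels off $n=1$. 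So once you fill in the shift honestly, your argument collapses into the paper's term-by-term expansion rather than a genuine one-step analysis. Your uniqueness argument, by contrast, is a legitimate and slightly slicker alternative, since $I$ is finite and positive recurrence forces $\rho(P_j)<1$.
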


The fourth lemma gives a restriction property of the invariant family of distributions:

\begin{lemma}\label{inheritable property of invariant family of distributions}
  Suppose that $\{\mu_i:i\in I\}$ is an invariant family of
  distributions associated with a set of milestones
  ${\MM}=\{M_i:i\in I\}$.  Then for any subset $I'\subset I$,
  $\{\mu_i:i\in I'\}$ is likewise an invariant family of distributions
  associated with the set of milestones ${\MM'}=\{M_i:i\in I'\}$.
\end{lemma}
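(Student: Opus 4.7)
The plan is to prove the restriction property by iterating the one-step invariance of $\{\mu_i:i\in I\}$ along the excursions of the original first-hitting chain between consecutive visits to the smaller collection $\mathcal{M}'$. Fix $i \in I'$ and start the original first-hitting chain $\{Y_n:n\in\NN_0\}$ associated with $\mathcal{M}$ from $Y_0 \sim \mu_i$. Let
\begin{equation*}
D = \inf\{n \ge 1 : \xi_n \in I'\}.
\end{equation*}
Positive recurrence of $X$ (and hence of $\{Y_n\}$ by the lemma preceding \eqref{eq:3}) guarantees $D<\infty$ almost surely. By strong Markovianity together with Definitions~\ref{def:milestones} and~\ref{def:milestoning jump process}, the first visit to $\bigcup_{j\in I'}M_j$ after time zero coincides with $Y_D$, so the shift operator $\mathcal{P}^{*}_{I'}$ associated with $\mathcal{M}'$ sends $\mu_i$ to the law of $Y_D$ under $\PP_{\mu_i}$. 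The target identity, which matches Definition~\ref{Def IFD} applied to $\mathcal{M}'$, is
\begin{equation*}
\PP_{\mu_i}\bigl[Y_D\in B,\ \xi_D=j\bigr] = p'_{i,j}\,\mu_j(B),\qquad j\in I',\ B\subset M_j,
\end{equation*}
with $p'_{i,j}=\PP_{\mu_i}[\xi_D=j]$.

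The core step is a path-by-path computation. By the jump-to-adjacent structure of $\{\xi_n\}$ (Definition~\ref{def:milestoning jump process}), the event $\{\xi_D=j,\,D=n\}$ decomposes into a finite union over admissible index paths $(i_1,\ldots,i_n)$ with $i_k\sim i_{k-1}$ (setting $i_0=i$), $i_1,\ldots,i_{n-1}\in I\setminus I'$, and $i_n=j$. I claim that for every such path,
\begin{equation*}
\PP_{\mu_i}\bigl[Y_n\in B,\ \xi_k=i_k,\ 1\le k\le n\bigr] = \PP_{\mu_i}\bigl[\xi_k=i_k,\ 1\le k\le n\bigr]\,\mu_j(B).
\end{equation*}
This is proved by induction on $n$. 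The base case $n=1$ is precisely the invariance in the form \eqref{eq:14} applied to adjacent $i\sim i_1$. For the inductive step, I condition at time $k-1$: using the strong Markov property of the first-hitting chain (Proposition~\ref{th:firsthitmarkov}) and the inductive hypothesis, the conditional law of $Y_{k-1}$ given $\xi_{k-1}=i_{k-1}$ is $\mu_{i_{k-1}}$, and then one more application of \eqref{eq:14} to $\mu_{i_{k-1}}$ pushes the conditional law one step further and identifies it on $M_{i_k}$ with $\mu_{i_k}$. Summing the path identity over all admissible $(i_1,\ldots,i_n)$ and all lengths $n\ge1$ yields the desired displayed equation (the series converges because $\PP_{\mu_i}[D<\infty]=1$), which finishes the proof.

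The main obstacle I anticipate is bookkeeping the iterated conditioning cleanly, in particular ensuring that the only feature of $Y_{k-1}$ that enters the inductive step is its marginal law, so that strong Markovianity decouples the past from the future once $\xi_{k-1}$ is fixed. A technically clean formulation is to write the joint law of $(Y_0,\ldots,Y_n)$ as $\mu_i(\dd y_0)\prod_{k=1}^n\nu(y_{k-1},\dd y_k)$ and then collapse the intermediate integrals layer by layer with the equivalent form of \eqref{eq:14},
\begin{equation*}
\int_{M_{i_{k-1}}}\mu_{i_{k-1}}(\dd x)\,P_{i_{k-1},i_k}(x)\,\nu_{i_{k-1},i_k}(x,\cdot) = p_{i_{k-1},i_k}\,\mu_{i_k}(\cdot).
\end{equation*}
As a by-product of the summation, the restricted transition probabilities take the expected taboo-path form $p'_{i,j}=\sum_n\sum_{(i_1,\ldots,i_n)}\prod_{k=1}^n p_{i_{k-1},i_k}$, where the inner sum ranges over admissible paths staying in $I\setminus I'$ until the final step.
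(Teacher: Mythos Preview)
Your overall strategy—iterating the one-step invariance of $\{\mu_i\}$ along excursions of the $\mathcal{M}$-chain until it first reaches $\mathcal{M}'$—is exactly what the paper does, and your path-by-path induction is a clean way to organize it. However, there is a genuine gap in the identification of the target object.

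The operator $(\mathcal{P}')^*$ sends $\mu_i$ to the law of $Y_1'$, and by Definition~\ref{def:skeleton} applied to $\mathcal{M}'$ the point $Y_1'$ lies on the first milestone in $\mathcal{M}'$ \emph{other than} $M_i$ that the trajectory hits (the skeleton records \emph{distinct} consecutive indices). Your stopping time $D=\inf\{n\ge1:\xi_n\in I'\}$ does not enforce this: the $\mathcal{M}$-chain can leave $M_i$ into $I\setminus I'$ and return to $M_i$ before touching any other element of $I'$, in which case $\xi_D=i$ and $Y_D\in M_i$ while the $\mathcal{M}'$-chain has not yet moved. For a concrete instance, take $I'$ to be the even indices and $i=2$: the path $2\to1\to2$ has your $D=2$ with $\xi_D=2$. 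Hence ``the first visit to $\bigcup_{j\in I'}M_j$ after time zero coincides with $Y_D$'' is false, and the displayed target identity you write down is not the IFD property for $\mathcal{M}'$. The correct stopping time is $D'=\inf\{n\ge1:\xi_n\in I'\setminus\{i\}\}$, and the admissible intermediate indices must be allowed to lie in $J'=(I\setminus I')\cup\{i\}$, not merely $I\setminus I'$. This is precisely the set $J'$ the paper introduces, working with $\eta=X\bigl(H^+_{M'\setminus M'_i}\bigr)$.

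Once $D$ is corrected to $D'$ and the admissible paths are enlarged to those staying in $J'$ until the final step, your inductive claim and the summation over paths go through unchanged and yield the lemma. (As an aside, the paper carries out the iteration slightly differently: it substitutes recursively to obtain an inequality and then converts it to an equality by applying the same bound to $M_j\setminus E$; your direct induction is a perfectly good alternative.)
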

We defer the proofs of these three lemmas to the appendix. 

\section{Exact calculation of mean first passage times in optimal milestoning}
\label{sec:exactcalc}

We prove in this section that optimal milestoning permits the exact
calculation of mean first passage times $T_{i,j}$. Let us first define
$T_{i,j}$ more rigorously by evaluating the limit
in~\eqref{eq:mfpt0}, which yields an equivalent probabilistic
definition by ergodic theorem.

\begin{proposition}\label{prop:mfpt}
  Assume that the first hitting chain $\{Y_n^{(i,j)}:n\in\NN_0\}$
  associated with ${\MM}^{(i,j)} = \{M_i, M_j\}$ is uniquely ergodic and denote its
  unique invariant distribution by $\mu^{(i,j)}$.  Then almost surely
  with respect to the law $\PP_{\mu^{(i,j)}}$, the limit in the
  definition \eqref{eq:mfpt0} of MFPT
  $T_{i,j}$ exists and can be expressed as
\begin{equation}\label{eq:Ti}
T_{i,j}=\EE_{\mu^{(i,j)}}\left.\left[\alpha_1^{(i,j)}\right|Y_0^{(i,j)}\in
  M_i\right]
=\EE_{\mu_i^{(i,j)}}\left[\alpha_1^{(i,j)}\right],
\end{equation}
where $\mu_i^{(i,j)}$ is the conditional
distribution of $\mu^{(i,j)}$ given $M_i$.
\end{proposition}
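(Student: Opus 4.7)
The plan is to read the right-hand side of \eqref{eq:mfpt0} as a ratio of two Birkhoff averages for an (extended) Markov chain and to identify each of them by the ergodic theorem. The denominator counts the fraction of times the chain sits on $M_i$, while the numerator accumulates the lag times conditionally on starting from $M_i$; dividing returns the conditional mean lag.

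More precisely, by Proposition~\ref{th:firsthitmarkov} applied to $\MM^{(i,j)}=\{M_i,M_j\}$, the first hitting chain $\{Y_n^{(i,j)}\}$ is Markov, and by the strong Markov property of $X$ the joint distribution of $(Y_p^{(i,j)},\alpha_{p+1}^{(i,j)})$ depends on the past only through $Y_p^{(i,j)}$. Hence the process $Z_p := (Y_{p-1}^{(i,j)},\alpha_p^{(i,j)})$, $p\geq 1$, is a Markov chain on $(M_i\cup M_j)\times \RR_+$ whose first marginal is $\{Y_n^{(i,j)}\}$. Its invariant law is obtained from $\mu^{(i,j)}$ by attaching, for each starting point $y$, the conditional law of $\alpha_1^{(i,j)}$ given $Y_0^{(i,j)}=y$; unique ergodicity of $\{Y_n^{(i,j)}\}$ lifts to unique ergodicity of $\{Z_p\}$. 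I would then apply Birkhoff's ergodic theorem to the two observables $g(y,\alpha)=\mathbf{1}_{M_i}(y)$ and $f(y,\alpha)=\alpha\,\mathbf{1}_{M_i}(y)$, getting, $\PP_{\mu^{(i,j)}}$-a.s.,
\begin{equation*}
\frac{1}{n}\sum_{p=1}^n \delta_{i,\xi_{p-1}^{(i,j)}}\;\longrightarrow\;\mu^{(i,j)}(M_i),\qquad \frac{1}{n}\sum_{p=1}^n \alpha_p^{(i,j)}\delta_{i,\xi_{p-1}^{(i,j)}}\;\longrightarrow\;\int_{M_i}\mu^{(i,j)}(\dd y)\,\EE^y\!\left[\alpha_1^{(i,j)}\right].
\end{equation*}
Taking the ratio, using that $\mu^{(i,j)}(M_i)>0$ (so the denominator is eventually bounded away from zero), and recognising
\begin{equation*}
\frac{1}{\mu^{(i,j)}(M_i)}\int_{M_i}\mu^{(i,j)}(\dd y)\,\EE^y\!\left[\alpha_1^{(i,j)}\right]\;=\;\EE_{\mu_i^{(i,j)}}\!\left[\alpha_1^{(i,j)}\right]\;=\;\EE_{\mu^{(i,j)}}\!\left[\alpha_1^{(i,j)}\,\big|\,Y_0^{(i,j)}\in M_i\right]
\end{equation*}
then yields the stated formula~\eqref{eq:Ti}.

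The principal obstacle is the integrability required to make the $L^1$ ergodic theorem give a finite limit in the numerator, i.e.\ that $\tau(y):=\EE^y[\alpha_1^{(i,j)}]$ is $\mu^{(i,j)}$-integrable. This has to be deduced from the positive recurrence of $X$ assumed in Section~\ref{sec:framework}: positive recurrence of $X$ yields finite mean return times to $\cup_k M_k$ for $X$, from which one infers, via a Kac-type identity applied to the sojourn between consecutive hits of $M_i\cup M_j$, that $\int \mu^{(i,j)}(\dd y)\,\tau(y)<\infty$. I would spell this out either by a direct Kac argument on the suspension of the chain by $\alpha$, or by citing the standard fact that positive Harris recurrence of $X$, uniform ellipticity of $\Lg$ and smoothness of $M_i,M_j$ make the mean hitting times $\tau(y)$ bounded on compacts and at worst locally integrable on the milestones. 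Once integrability is secured, the remainder is routine; the only other minor point, unique ergodicity of $\{Z_p\}$, is immediate because any invariant measure of $Z$ projects to an invariant measure of $Y$, and because the conditional law of $\alpha_1$ given $Y_0$ is intrinsic.
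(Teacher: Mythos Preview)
Your proof is correct and follows essentially the same route as the paper: both construct the extended Markov chain $\{(Y_{n-1}^{(i,j)},\alpha_n^{(i,j)})\}$, observe that unique ergodicity of $\{Y_n^{(i,j)}\}$ lifts to unique ergodicity of the extension, apply the ergodic theorem to the two observables $\one_{M_i}(y)$ and $\alpha\,\one_{M_i}(y)$, and take the ratio. The paper's proof is terser and simply asserts the one-to-one correspondence of invariant measures and the applicability of the ergodic theorem, whereas you spell out why the lift is uniquely ergodic and flag the $L^1$ integrability of $\tau(y)$ as the one genuine hypothesis to check; your Kac-type justification via positive recurrence of $X$ is a reasonable way to close that gap, which the paper leaves implicit.
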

\begin{proof}
  It is easily seen that $\{(\alpha_n^{(i,j)}, Y_n^{(i,j)}):n\in \NN_0\}$
  and $\{Y_n^{(i,j)}:n\in\NN_0\}$ are both Markov chains.  Also note
  that there is a one-to-one correspondence between the invariant
  distributions of these two Markov chains and they are both induced
  from the law $\PP_{\mu^{(i,j)}}$.  In view of the ergodic theorem,
  we obtain almost surely with respect to $\PP_{\mu^{(i,j)}}$,
\begin{equation*}
  \begin{split}
    T_{i,j}&=\lim_{n\rightarrow\infty}\frac{\frac{1}{n}
      \sum_{k=1}^n \alpha_{k+1}^{(i,j)}\one_{M_i}\bigl(Y_{k}^{(i,j)}\bigr)}
    {\frac{1}{n}\sum_{k=1}^n \one_{M_i}\bigl(Y_{k}^{(i,j)}\bigr)}
    =\frac{\EE_{\mu^{(i,j)}}\left[\alpha_1^{(i,j)}\one_{M_i}
        \bigl(Y_{0}^{(i,j)}\bigr)\right]}{\PP_{\mu^{(i,j)}}\left[Y_0^{(i,j)}\in M_i\right]},
\end{split}
\end{equation*}
as asserted in \eqref{eq:Ti}.
\end{proof}

We are now ready to justify \eqref{eq:Ti0} rigorously.

\begin{theorem}\label{thm:exact calculation of  mean first passage times}
  Assume that ${\MM}=\{M_i:0\le i\le N\}$ is a set of milestones made
  of the backward isocommittor surfaces as in (\ref{backward
    isocommittor surfaces}) satisfying the regularity condition
  $\lvert\nabla q^-(x)\rvert>0$ for every $x\in \MM$.  For each
  $0\le i\le N$, let $\mu_i$ be the probability measure concentrated
  on $M_i$ with the density $\rho_i$ given in
  \eqref{eq:fhdensity}. Also assume that $\{Y^{(i,j)}_{n}:n\in\NN_0\}$
  is uniquely ergodic.  Then the mean first passage times
  $(T_{i,j})_{i,j\in I}$ is the unique solution to the following
  discrete Poisson problem:
\begin{equation}\label{eqn:discrete Poisson problem for T_i:chap 5}
\begin{cases}
\displaystyle T_{i,j}=t_i+\sum_{k\in I} q_{i,k}T_{k,j}, \quad &i \in I,
\quad i\neq j, \\
T_{j,j}=0.  &
\end{cases}
\end{equation}
where $q_{i,j}$ is given by (\ref{transition probability 2}) and
\begin{equation*}
  t_i=\sum_{j\in I}q_{i,j}\EE_{\mu_i}\bigl[\alpha_1\big|\xi_1=j\bigr]
  =\EE_{\mu_i}\bigl[\alpha_1\bigr].
\end{equation*}
\end{theorem}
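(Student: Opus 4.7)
The plan is to reduce the theorem to Lemma~\ref{lemma:calculate MFPT} applied under the invariant family from Theorem~\ref{IFD in Diffusion}, and then to identify the resulting quantity $h_{i,j}$ with $T_{i,j}$ via the restriction property in Lemma~\ref{inheritable property of invariant family of distributions}. The argument divides naturally into two steps.

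First, I would assemble the consequences of Theorem~\ref{IFD in Diffusion} for the full milestone set $\MM$: it supplies the invariant family $\{\mu_i : i\in I\}$ with densities $\rho_i$ from~\eqref{eq:fhdensity}; Lemma~\ref{lemma:relation between FHD and IFD} produces the invariant distribution $\mu = \sum_{i\in I} \pi_i \mu_i$ of the first hitting chain; Lemma~\ref{lemma:IFD implies assumption ii} ensures that under $\PP_\mu$ the index chain $\{\xi_n\}$ is Markov with transitions $q_{i,j}$ from~\eqref{transition probability 2} and that the conditional lag expectations depend only on the last two indices. Lemma~\ref{lemma:calculate MFPT} then applies and yields that $h_{i,j} = \EE_\mu[\tau_D \mid \xi_0 = i]$ is the unique solution of~\eqref{eqn:discrete Poisson problem for traditional milestoning} with $p_{i,k} = q_{i,k}$ and $t_{i,k} = \EE_{\mu_i}[\alpha_1 \mid \xi_1 = k]$. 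By the tower property, $\sum_{k\in I} q_{i,k} t_{i,k} = \EE_{\mu_i}[\alpha_1] = t_i$, so the Poisson equation for $h_{i,j}$ becomes exactly~\eqref{eqn:discrete Poisson problem for T_i:chap 5}, and uniqueness is inherited from Lemma~\ref{lemma:calculate MFPT}.

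Second, I would identify $h_{i,j}$ with $T_{i,j}$. Lemma~\ref{inheritable property of invariant family of distributions} shows that $\{\mu_i, \mu_j\}$ is the invariant family associated with $\MM^{(i,j)} = \{M_i, M_j\}$; combined with Lemma~\ref{lemma:relation between FHD and IFD} and the assumed unique ergodicity of $\{Y_n^{(i,j)}\}$, this forces the normalized marginal $\mu_i^{(i,j)}$ to coincide with the $\mu_i$ from the full family. Proposition~\ref{prop:mfpt} then gives $T_{i,j} = \EE_{\mu_i}[\alpha_1^{(i,j)}]$. Pathwise, $\alpha_1^{(i,j)}$ started from any $x \in M_i$ is the first hitting time of $M_j$ by the diffusion from $x$; because the full milestoning index chain only jumps between adjacent indices, $\tau_D$ with $\xi_0 = i$ coincides with that same first hitting time. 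Taking expectations under $\mu_i$ yields $T_{i,j} = h_{i,j}$, concluding the proof.

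The step I expect to require the most care is the identification $T_{i,j} = h_{i,j}$, and specifically the equality $\mu_i^{(i,j)} = \mu_i$. One must leverage the inheritance of the invariant family together with the uniqueness assumption on $\{Y_n^{(i,j)}\}$ to rule out any competing normalized marginal on $M_i$; the remaining pathwise matching of $\alpha_1^{(i,j)}$ and $\tau_D$, together with the simplification $\sum_k q_{i,k} t_{i,k} = t_i$, are essentially bookkeeping once this identification is in place.
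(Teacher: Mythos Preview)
Your proposal is correct and follows essentially the same route as the paper: assemble the invariant family via Theorem~\ref{IFD in Diffusion}, invoke Lemma~\ref{lemma:calculate MFPT} to obtain the Poisson system for $h_{i,j}$, then use Lemma~\ref{inheritable property of invariant family of distributions} together with unique ergodicity of $\{Y_n^{(i,j)}\}$ to force $\mu_i^{(i,j)}=\mu_i$ and hence $T_{i,j}=\EE_{\mu_i}[H^+_{M_j}(0)]=h_{i,j}$. The only minor remark is that the pathwise identification $\tau_D=H^+_{M_j}(0)$ does not actually rely on the nearest-neighbour jump structure you cite; what is needed (and what the paper invokes) is $\lim_{n\to\infty}\tau_n=\infty$, so that the first entry of the diffusion into $M_j$ is necessarily one of the jump times $\tau_n$.
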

\begin{proof}
  Much of the work needed to prove this theorem has been done in
  proving the previous lemmas.  By Theorem \ref{IFD in Diffusion},
  Lemma \ref{inheritable property of invariant family of
    distributions}, and the proof of Lemma \ref{lemma:relation between
    FHD and IFD}, $\frac 12\mu_i+\frac 12\mu_j$ is an invariant
  distribution for the Markov chain $\{Y^{(i,j)}_{n}:n\in\NN_0\}$.  Note
  that $H^+_{M_j}(0)=D$ (recall $D$ from Lemma~\ref{lemma:calculate MFPT}) due to
  $\lim_{n\rightarrow\infty}\tau_n=\infty$ (see the proof of Lemma
  \ref{inheritable property of invariant family of distributions}).
  So by Proposition \ref{prop:mfpt},
$$
T_{i,j}=\EE_{\mu_i}\left[H^+_{M_j}(0)\right]=\EE_{\mu}\left[D\big|\ \xi_0=i\right]=h_{i,j},
$$
and the assertion  follows immediately from Lemma \ref{lemma:calculate MFPT}.
\end{proof}

\section{Accelerated sampling methods based on milestoning}
\label{sec:accelerate}

As shown in Section~\ref{sec:exactcalc}, for diffusion processes we
can calculate the MFPT exactly by using backward isocommittor surfaces
as milestones. The only required quantities in this calculation are
$p_{i,j}=\PP_{\mu_i}\bigl[\xi_1=j\bigr]$ and
$t_i=\EE_{\mu_i}\bigl[\alpha_1\bigr]$, where $\mu_i$ is the invariant
family of distributions on the milestone $M_i$.  In words, $t_i$ is
the average time a trajectory initiated on the $i$th milestone
randomly from $\mu_i$ takes before it hits another milestone, and
$p_{i,j}$ is the probability that the next milestone hit by this
trajectory (other than the $i$th milestone) is the $j$th one.

In this section we address the question of how to sample these
quantities. Note that this involves two practical issues: (i) how to
sample $p_{i,j}$ and $t_i$ based on short trajectories given a
set of milestones, and (ii) assuming that we want to do optimal
milestoning, how to pick milestones that approximate level sets of
$q^-$. Below we will discuss these two issues separately, without
necessarily assuming in (i) that we use an optimal set of milestones.

\subsection{Sampling $p_{i,j}$ and $t_i$}\label{sec:parameters}

We start with the issue of how to compute
$p_{i,j}=\PP_{\mu_i}\bigl[\xi_1=j\bigr]$ and
$t_i=\EE_{\mu_i}\bigl[\alpha_1\bigr]$ given an arbitrary set of
milestones consistent with Definition~\ref{def:milestones}. In
principle, these quantities can be sampled by reinitializing short
trajectories on each milestone $M_i$ according to the distributions
$\mu_i$ defined in~\eqref{eq:1} and running until each trajectory hits
another milestone. This procedure is not easy to implement in
practice, however, since it requires one to sample from $\mu_i$, which
we do not know \textit{a~priori} (recall that in this section we do
not assume that the milestones are optimal, i.e. the density of
$\mu_i$ is not given by~\eqref{eq:fhdensity} in general). In the
original milestoning procedure, it was assumed that each $\mu_i$ can
be approximated by the invariant distribution conditional on $M_i$,
but the accuracy of this approximation is difficult to assess.

One way to get around this difficulty and estimate
$p_{i,j}=\PP_{\mu_i}\bigl[\xi_1=j\bigr]$ and
$t_i=\EE_{\mu_i}\bigl[\alpha_1\bigr]$ directly in an unbiased way is
to use a sampling strategy that bypasses the need of the
reinitialization and thereby avoids the problem of having to know
$\mu_i$ beforehand. In fact, as we will see below, this procedure
permits to sample $\mu_i$, as well as $\tau(x)$ and $\nu(x,\cdot)$,
which is useful in the context of exact milestoning \cite{BRE15}. The
key result behind this strategy is summarized in the following lemma
that uses ergodicity:
\begin{lemma} 
\label{th:sampling}
We have
\begin{equation*}
\begin{split}
p_{i,j}&=\PP_{\mu_i}\bigl[\xi_1=j\bigr]=\PP_{\mu}\bigl[\xi_1=j\ \big|\ \xi_0=i\bigr]\\
&=\lim_{n\rightarrow\infty} \frac{\sum_{k=1}^n \delta_{i,\xi_{k-1}}\delta_{j,\xi_k}}{\sum_{k=1}^n\delta_{i,\xi_{k-1}}}
=\lim_{T\rightarrow\infty} \frac{N^T_{ij}}{N^T_{i}}
=\lim_{T\rightarrow\infty} \frac{N^T_{ij}}{\sum_jN^T_{ij}},
\end{split}
\end{equation*}
\begin{equation*}
  \begin{split}
    {t}_{i}&=\EE_{\mu_i}\bigl[\alpha_1\bigr]=\EE_{\mu}\bigl[\alpha_1\big|\ \xi_0=i\bigr]\\
    &=\lim_{n\rightarrow\infty} \frac{\sum_{k=1}^n \alpha_k\delta_{i,\xi_{k-1}}}{\sum_{k=1}^n\delta_{i,\xi_{k-1}}}
    =\lim_{T\rightarrow\infty} \frac{R^T_i}{N^T_{i}}=\lim_{T\rightarrow\infty} \frac{R^T_i}{\sum_jN^T_{ij}},
\end{split}
\end{equation*}
where $N^T_{ij}$ is the number of times observed in $[0, T]$ that the
trajectory visits $M_j$ after hitting $M_i$ last,
$N^T_i = \sum_j N^T_{ij}$ is the number of times in $[0, T]$ that the
trajectory hits $M_i$ after hitting another milestone last, and
$R_i^T$ is the total time in $[0, T]$ during which $M_i$ is the
milestone that the trajectory hits last.
\end{lemma}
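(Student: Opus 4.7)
The plan is to reduce both chains of equalities to a single application of Birkhoff's ergodic theorem to the first hitting chain, and then to identify the discrete ergodic averages with the continuous-time counting variables $N^T_{ij}$ and $R^T_i$. The first equality in each line, $\PP_{\mu_i}[\xi_1=j]=\PP_\mu[\xi_1=j\mid\xi_0=i]$ and $\EE_{\mu_i}[\alpha_1]=\EE_\mu[\alpha_1\mid\xi_0=i]$, is immediate from the decomposition $\mu=\sum_{i\in I}\pi_i\mu_i$ established earlier and the definition of conditional expectation.

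For the middle limits, I would invoke Proposition~\ref{th:firsthitmarkov} and the ensuing lemma: positive recurrence of $X$ transfers to the first hitting chain $\{Y_n:n\in\NN_0\}$, and the chain $\{(Y_n,\alpha_{n+1}):n\in\NN_0\}$ is likewise Markov with invariant law induced by $\mu$ (here $\alpha_{n+1}$ is a measurable functional of the strong-Markov continuation of $X$ from $Y_n$, so the enlarged chain inherits ergodicity from $\{Y_n\}$). Applying Birkhoff's ergodic theorem to the indicator $\one_{M_i}(Y_{k-1})\one_{M_j}(Y_k)$ and to $\one_{M_i}(Y_{k-1})$ separately, the ratio converges $\PP_\mu$-a.s.\ to $\pi_i p_{i,j}/\pi_i=p_{i,j}$. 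The same theorem applied to the integrable functional $\alpha_k\one_{M_i}(Y_{k-1})$ (whose $\mu$-expectation is $\pi_i t_i$, finite under positive recurrence) yields the analogous limit for $t_i$.

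For the rightmost identities, let $K(T)=\#\{n\ge 1:\tau_n\le T\}$ denote the number of milestone jumps up to time $T$. Positive recurrence forces $K(T)\to\infty$ almost surely. By construction,
\begin{equation*}
N^T_{ij}=\sum_{k=1}^{K(T)}\delta_{i,\xi_{k-1}}\delta_{j,\xi_k}+r^T_{ij},\qquad R^T_i=\sum_{k=1}^{K(T)}\alpha_k\delta_{i,\xi_{k-1}}+s^T_i,
\end{equation*}
where $r^T_{ij}\in\{0,1\}$ encodes the (possibly incomplete) transition straddling $T$ and $0\le s^T_i\le \alpha_{K(T)+1}$ is the analogous boundary residual. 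Dividing by $N^T_i=\sum_j N^T_{ij}$, which itself grows like $\pi_i K(T)\to\infty$, makes both residuals negligible, so the $T\to\infty$ limits collapse onto the $n\to\infty$ limits already handled. The identity $N^T_i=\sum_j N^T_{ij}$ is just the bookkeeping that every departure from $M_i$ lands on some adjacent milestone.

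The main obstacle is not any single computation but rather the ergodicity input that underlies the Birkhoff step: one needs that the invariant distribution of $\{Y_n\}$ is unique so that the ergodic averages converge $\PP_\mu$-a.s.\ to the deterministic expectations appearing on the left. This uniqueness follows from the irreducibility and positive recurrence of the underlying uniformly elliptic diffusion, which are standing hypotheses in the paper; a secondary technical point is the integrability of $\alpha_1$ under $\mu$, which is also implicit in positive recurrence. Once these are in hand, the remaining boundary-term control and the passage between the discrete index $k$ and the continuous time $T$ are standard.
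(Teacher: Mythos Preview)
Your proposal is correct and follows the natural ergodic-theorem route. The paper does not actually prove this lemma: immediately after the statement it writes ``For a proof of a similar result with two milestones, see~\cite{LN15}.'' So there is no in-paper proof to compare against line by line. That said, your argument is precisely the one the paper deploys elsewhere: the proof of Proposition~\ref{prop:mfpt} uses exactly the same mechanism you propose---observe that $\{(\alpha_n,Y_n)\}$ and $\{Y_n\}$ are Markov chains whose invariant laws are in one-to-one correspondence and are both induced by $\PP_\mu$, then apply the ergodic theorem to the numerator and denominator separately. Your treatment of the passage from the discrete index $n$ to continuous time $T$ via $K(T)$ and the $O(1)$ boundary residuals is the standard (and correct) way to close the remaining gap.
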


For a proof of a similar result with two milestones, see~\cite{LN15}.
The above lemma gives estimators for $p_{i,j}$ and $t_i$ in terms of
an unbiased long trajectory of the milestoning index process. In
practice, it is more efficient to use short parallel
trajectories. Such a sampling method was proposed in \cite{VEV09}
based on Voronoi tessellation. A variant can be formulated as follows:
Consistent with Definition~\ref{def:milestones}, suppose that we
define the milestones as level sets of a function $f:\RR^n \to [0,1]$,
assuming that these level sets form a nested family of smooth surfaces
(like e.g. the level sets of the backward committor function $q^-$ in
the context of optimal milestoning, but with $f$ not necessarily equal
to $q^-$). Specifically, given $0< z_0< z_1 \cdots< z_N<1$, define
\begin{equation}
  \label{eq:17}
  M_i = \{x: f(x) = z_i\}, \qquad i\in I.
\end{equation}
Then set
\begin{equation}
  \label{eq:18}
  \Omega_i = \{ x: z_{i-1} \le f(x) \le z_{i+1}\}, \qquad i\in I
\end{equation} 
where we interpret $z_{-1} =-\infty$ and $z_{N+1} =+\infty$. Thus, for
$i=1,\ldots, N-1$, $\Omega_i$ is the region comprised between $M_{i-1}$ and
$M_{i+1}$, $\Omega_0$ is the region on the side of $M_1$ that
contains $M_0$ and $\Omega_N$ is the region on the side of $M_{N-1}$ that
contains $M_{N}$. It is then easy to see that if one considers the
solution to the SDE in~\eqref{eq:SDE} in each $\Omega_i$, with reflective
(no-flux, Neumann) boundary conditions at
$\partial \Omega_i = M_{i-1}\cup M_{i+1}$, then these solutions can be used
to sample $p_{i,j}$ and $t_i$. Specifically, if we consider the set of
milestones $\mathcal{M}^{(i)} \equiv \{M_{i-1},M_i,M_{i+1}\}$ for
$i=1,\ldots, N-1$, $\mathcal{M}^{(0)}=\{M_0,M_1\}$ and
$\mathcal{M}^{(N)}=\{M_{N-1},M_{N}\}$, and use the solutions of the
SDE in $\Omega_i$ with reflective boundary condition to construct the
sequence $\{(\xi_n^{(i)},\tau_n^{(i)}):n\in \NN_0\}$ associated with
$\mathcal{M}^{(i)}$, then Lemma~\ref{th:sampling} still holds if we
replace $\xi_n$ by $\xi_n^{(i)}$ and $\alpha_n$ by
$\alpha_n^{(i)} = \tau_n^{(i)}-\tau_{n-1}^{(i)}$ in the ergodic
averages. We state this result as:

\begin{proposition} 
\label{th:sampling2}
We have
\begin{equation*}
\begin{split}
p_{i,j}=\lim_{n\rightarrow\infty} \frac{\sum_{k=1}^n
  \delta_{i,\xi^{(i)}_{k-1}}
  \delta_{j,\xi^{(i)}_k}}{\sum_{k=1}^n\delta_{i,\xi^{(i)}_{k-1}}},
\end{split}
\end{equation*}
\begin{equation*}
  \begin{split}
    {t}_{i}=\lim_{n\rightarrow\infty} \frac{\sum_{k=1}^n
      \alpha^{(i)}_k\delta_{i,\xi^ {(i)}_{k-1}}}
    {\sum_{k=1}^n\delta_{i,\xi^ {(i)}_{k-1}}}
\end{split}
\end{equation*}
\end{proposition}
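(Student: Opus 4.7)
The plan is to apply the ergodic theorem to the reflected first hitting chain $\{Y^{(i)}_n\}$ and match the resulting limits to $p_{i,j}$ and $t_i$ by exploiting two ingredients: the outgoing transition kernel from $M_i$ is unaffected by reflection, and the invariant conditional distribution on $M_i$ coincides for the reflected and the full first hitting chains.

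For the first two steps, I would verify positive recurrence of the reflected SDE in $\Omega_i$ with Neumann conditions on $\partial\Omega_i=M_{i-1}\cup M_{i+1}$ via uniform ellipticity of $\Lg$, supplemented by a Lyapunov argument in the unbounded cases $\Omega_0, \Omega_N$. By the argument of Proposition~\ref{th:firsthitmarkov}, $\{Y^{(i)}_n\}$ is then a positive recurrent Markov chain on $\cup_{k=i-1}^{i+1}M_k$ with a unique invariant distribution $\mu^{(i)}$; setting $\mu^{(i)}_i=\mu^{(i)}(\cdot|M_i)$, the same ergodic theorem argument used in the proof of Proposition~\ref{prop:mfpt} identifies the two empirical averages in the statement with $\int_{M_i}\mu^{(i)}_i(\dd x)P^{(i)}_{i,j}(x)$ and $\int_{M_i}\mu^{(i)}_i(\dd x)\,\EE^x[\alpha^{(i)}_1]$, respectively. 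The outgoing kernels themselves match automatically: any excursion of~\eqref{eq:SDE} started at $x\in M_i$ and stopped at the first hit of $M_{i-1}\cup M_{i+1}$ is entirely contained in $\Omega_i$, since it terminates precisely when it would otherwise leave $\Omega_i$; the reflective boundary is never felt, so $P^{(i)}_{i,j}(x)=P_{i,j}(x)$ and $\EE^x[\alpha^{(i)}_1]=\EE^x[\alpha_1]$ for every $x\in M_i$ and $j=i\pm 1$.

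The proposition therefore reduces to showing $\mu^{(i)}_i=\mu_i$, which is the main obstacle. My plan is to represent both measures as normalized first-hit fluxes on $M_i$: in the spirit of transition path theory~\cite{EVE06,VEV09b}, one argues that in stationarity the density of first-hits on $M_i$ per unit time equals the modulus of the normal component of the continuous-time probability current of the corresponding invariant density across $M_i$, giving $\mu_i\propto|j_\rho\cdot n_i|\,\dd\sigma_{M_i}$ and $\mu^{(i)}_i\propto|j_{\rho^{(i)}}\cdot n_i|\,\dd\sigma_{M_i}$, where $\rho$ and $\rho^{(i)}$ are the invariants of the full and reflected continuous processes and $n_i$ is the unit normal to $M_i$. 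The equality then follows from the trajectory-parallelization decomposition of~\cite{VEV09b}: the full invariant $\rho$ admits a representation as a weighted sum of the reflected cell invariants $\rho^{(i)}$ with weights determined by flux balance at the cell interfaces, which forces $\rho|_{\Omega_i}$ and $\rho^{(i)}$ to be proportional on $\Omega_i$, and hence the currents to be proportional on $M_i$. Making this decomposition rigorous in the non-reversible setting — and in particular verifying that the flux-balance equations uniquely determine the weights so that proportionality extends to the interior of $\Omega_i$ — is the technical crux of the proof.
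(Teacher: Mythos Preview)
Your approach differs substantially from the paper's. The paper's proof is a single sentence: it asserts that ``the law of the solutions to the SDE in $\Omega_i$ with reflective boundary conditions on $\partial\Omega_i$ is identical to that of the solutions to the original SDE pruned to $\Omega_i$,'' and declares the proposition an immediate consequence. In other words, the paper claims a pathwise equivalence between the reflected process and the \emph{trace} of the original process on $\Omega_i$ (time-changed to remove sojourns outside $\Omega_i$). If one accepts that claim, the result is automatic: the first hitting chain on $\{M_{i-1},M_i,M_{i+1}\}$ built from the trace coincides (in its spatial component, and in the $M_i$-based time lags) with that of the original process, so Lemma~\ref{th:sampling} transfers directly.

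You instead work at the level of invariant measures: you observe (correctly) that the $M_i$-outgoing kernel is unaffected by reflection, reduce the proposition to $\mu_i^{(i)}=\mu_i$, and propose to establish this via a flux/current representation and the cell decomposition of~\cite{VEV09b}. This is a genuinely different and more laborious route. What it buys you is robustness: the paper's pathwise claim is delicate in the non-reversible setting, since the trace process on $\Omega_i$ generally acquires a nonlocal boundary behavior (excursions of $X$ outside $\Omega_i$ can re-enter $\partial\Omega_i$ at a different point), which is \emph{not} reproduced by standard normal reflection. So the one-line proof is at best a sketch that is transparent when detailed balance holds (then $\rho|_{\Omega_i}$ is invariant for the reflected process and the currents match trivially) but leaves a gap otherwise --- precisely the gap you identify as the ``technical crux.'' Your plan is sound in outline; just be aware that the flux-balance/parallelization argument you invoke from~\cite{VEV09b} was developed in a reversible context, and carrying it through here will require you to check that the Neumann condition you impose actually makes $\rho|_{\Omega_i}$ (or a suitable multiple) invariant for the reflected dynamics, which is exactly where non-reversibility bites.
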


\begin{proof}
  The proposition is a consequence of the fact that the law of the
  solutions to the SDE in $\Omega_i$ with reflective boundary conditions on
  $\partial \Omega_i$ is identical to that of the solutions to the original
  SDE pruned to $\Omega_i$.
\end{proof}

The computational advantage of this result is clear, as it permits to
replace the sampling of one long trajectory across the whole domain by
that of $N+1$ trajectories in the domains $\Omega_i$, $i\in I$. This
calculation can be done in parallel, and it guarantees that we can
sample the process in regions that a long unbiased trajectory may
visit only very infrequently.

We also note that if we consider the set of milestones
$\mathcal{M}^{(i)} \equiv \{M_{i-1},M_i,M_{i+1}\}$ for
$i=1,\ldots, N-1$, $\mathcal{M}^{(0)}=\{M_0,M_1\}$ and
$\mathcal{M}^{(N)}=\{M_{N-1},M_{N}\}$, and use the solutions of the
SDE in $\Omega_i$ with reflective boundary condition to construct the
sequence $\{(Y_n^{(i)},\tau_n^{(i)}):n\in \NN_0\}$, we can also sample
$\mu_i(\cdot)$, $\tau(x)$, and $\nu(x,\cdot)$, at least in
principle. Indeed, this amounts to recording the first time and
locations the solution of the SDE in~$\Omega_i$ with reflective
boundary condition on $\partial \Omega_i$ crosses $M_i$ first after
hitting $\partial \Omega_i$, as well as the time and locations the
trajectory hit this boundary the first time after that. This
calculation is clearly more computationally expensive than that of
$t_i$ and $p_{i,j}$, as it requires much more data to converge, but it
offers a route towards exact calculations within exact milestoning. We
refer the reader to~\cite{VEV09b} for details in the context of a
similar procedure termed trajectory parallelization and tilting. We
also remark that the full information about $\nu(x, \cdot)$ etc. is
not necessarily needed to compute certain quantities. For example,
$T_j(x)$ as in Definition~\ref{def:Tj} can be solved using
\eqref{eq:4} iteratively, by noticing that the integral
$\int_{\cap_{i\in I} M_i} \nu(x, \d y) f(y)$ can be sampled by
trajectories starting from $x \in M_i$ that hits one of the
neighboring milestones.

\subsection{Approximation of the isocommittor surfaces}

The other issue is how to identify the isocommittor surfaces to be
used as milestones in system of high dimensions.  This is probably the
most complicated problem to address. There are several algorithms to
effectively approximate the isocommittor surfaces under suitable
assumptions, e.g., \cite{ERVE05,ERVE05a,MFVEC06,MD05,PT06,VEV09a}.  In
particular, when the reaction channels are localized either in the
original variables or in suitable collective variables, the string
method \cite{ERVE02,ERVE05,ERVE05a,MFVEC06,VEV09a} can be used to
calculate the isocommittor surfaces.  The output of the string method
is one or more curves, each corresponding to a reaction tube.  Assume
that the curve $\gamma$ is parametrized by $\varphi(s)$ with
$s\in[0, 1]$.  We may take the parameter $s$ to be the normalized
arclength along $\gamma$ and so $\lvert \varphi'(s)\rvert\equiv$ the
total length of the curve $\gamma$. Associated with $\gamma$, we
define a function $s_{\gamma}(x)$ as
\begin{equation*}
 s_{\gamma}(x)=\arg\min_{s\in[0,1]}\lvert x-\varphi(s)\rvert.
\end{equation*}
In words, $s_{\gamma}(x)$ is the value of the parameter $s$
identifying the point $\varphi(s_{\gamma}(x))$ the closest to $x$
along the curve to the point $x$.  If the minimum is achieved by more
than one value of $s$, we conventionally assign to $s_{\gamma}(x)$ the
smallest value.  It is easily seen that $s_{\gamma}(x)$ satisfies
$$
\varphi'(s_{\gamma}(x))\cdot(x-\varphi(s_{\gamma}(x)))=0.
$$
The key approximation in the string method is that the isocommittor
surfaces are approximated locally in the reaction tube by the level
sets of $s_{\gamma}(x)$. In other words, we assume that we can
approximate the committor function $q^-$ locally in the vicinity of
the curve $\gamma$ by
$$
q^-(x)\approx f(x) = \int_{\RR^n} K_\delta(x-y) Q(s_{\gamma}(y)) \dd y
$$
where $Q:[0,1]\to[0,1]$ with $Q'(x)>0$ is some rescaling function and
$K_\delta(x)$ is some smoothing kernel introduced to guarantee that
the approximation $f(x)$ for $q^-(x)$ is smooth: for example, one
could take
\begin{displaymath}
  K_\delta(x) = (2\pi \delta^2)^{-n/2} e^{-\frac12 |x|^2/\delta^2}
  \qquad \text{for some} \ \delta >0.
\end{displaymath}
The accuracy is unfortunately hard to assess, except in very special
circumstances. We will not dwell on this issue further here, even
though it should be stressed that the results of optimal milestoning
will crucially depend on the accuracy of the isocommittor surfaces we
choose as optimal milestones.


\section{Concluding remarks}
\label{sec:conclusion}

The main objective of this paper was to prove that a specific version
of milestoning, termed optimal milestoning, retains information about
the kinetics of the original process and permits e.g. the exact
calculation of mean first passage times (MFPTs). As we saw, this
property requires one to use specific sets of milestones, namely level
sets of the backward committor function associated with the reaction
from any set $A$ to any $B$. We also explained why such milestones
must by used by considering exact milestoning, which is akin to other
non-equilibrium umbrella sampling methods, and also permits the exact
calculation of certain MFPTs, but at the price of also storing the
locations at which the process transitions from milestone to
milestone. These results set standards to meet in order to use
milestoning as an accelerated sampling scheme. What now remains to be
developed are more computational tools to efficiently compute the
isocommittor surfaces needed in optimal milestoning, along with
theoretical tools to assess the error introduced, say, in the MFPTs if
one uses isocommittor surfaces that are only approximated, like
e.g. those given by the string method. Alternatively, one could use
exact milestoning, in which case the main issue becomes the efficient
computation and storage of the quantities needed in that approach.

\appendix

\section{Proof of Theorem \ref{IFD in Diffusion}}
In the proof of Theorem \ref{IFD in Diffusion}, the following lemma
will be needed.
\begin{lemma}\label{lemma:Z}
All the surface integrals $Z_i$'s in (\ref{eqn:Zi}) are
identical.
\end{lemma}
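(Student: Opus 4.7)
The plan is to recognize $Z_i$, up to a term that vanishes separately, as the flux through $M_i$ of a divergence-free vector field, so that constancy in $i$ follows directly from the divergence theorem. The starting point is the standard duality identity $\Lg^*(\rho f)=\rho\,\Lg^\dag f$, which holds because $\Lg^*\rho=0$ and which I would verify by a direct expansion using the explicit formulas for $\Lg^*$ and $\Lg^\dag$ given in the paper. Applying it to $f=q^-$ and using $\Lg^\dag q^-=0$ in $\RR^d\setminus(A\cup B)$ yields $\Lg^*(\rho q^-)=0$ there, i.e.\ the vector field
$$J := a\nabla(\rho q^-) - b\rho q^- = q^-(a\nabla\rho - b\rho) + \rho\, a\nabla q^-$$
is divergence-free on the region separating $A$ from $B$.

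I would then apply the divergence theorem to $J$ on the slab $\{z_{i+1}\le q^-\le z_i\}$ bounded by consecutive milestones, orienting every $M_i$ by the common unit normal $\hat n=-\nabla q^-/|\nabla q^-|$ (pointing from $A$ toward $B$, i.e.\ in the direction of decreasing $q^-$). This shows that $\Phi_i:=\int_{M_i}J\cdot\hat n\,\dd\sigma$ is independent of $i$. Substituting $q^-\equiv z_i$ on $M_i$ and expanding $J$ against $\hat n$ produces
$$\Phi_i = -z_i\,\Psi_i - Z_i,\qquad \Psi_i := \int_{M_i}\frac{(a\nabla\rho - b\rho)\cdot\nabla q^-}{|\nabla q^-|}\,\dd\sigma.$$

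The last step is to show $\Psi_i=0$ for every $i$. Because $\Lg^*\rho=0$ on all of $\RR^d$, the field $a\nabla\rho - b\rho$ is globally divergence-free. Applying the divergence theorem to this field on the superlevel set $A_i=\{q^-\ge z_i\}$, which contains $A$ in its interior since $q^-|_{\partial A}=1>z_i$, identifies $\Psi_i$ with the total outward flux across $\partial A_i=M_i$, because the outward normal of $A_i$ along $M_i$ is exactly $-\nabla q^-/|\nabla q^-|$; this flux is zero by the divergence theorem. Combined with the previous step, $Z_i=-\Phi_i$ is independent of $i$, as claimed.

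The main obstacle will be the legitimacy of the divergence theorem in this last step, since $A_i$ need not be bounded a priori under the stated hypotheses. A clean workaround is to work instead on the bounded region $A_i\setminus A$, and to use separately that the stationary probability current $b\rho - a\nabla\rho$ has zero flux through $\partial A$ by probability conservation in $A$ (namely $\int_A \Lg^*\rho\,\dd x=0$); this reduces $\Psi_i=0$ to that boundary statement and sidesteps any integrability issue at infinity. All other steps are routine bookkeeping of signs and normals.
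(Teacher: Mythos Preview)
Your proposal is correct and amounts to the same computation as the paper's, just packaged differently: the paper expands $\Lg^\dag$ to obtain the identity $\nabla\cdot(\rho a\nabla q^-)=\nabla\cdot(q^-J)$ (with $J=\rho b-a\nabla\rho$ the stationary current), which is exactly your statement that $a\nabla(\rho q^-)-b\rho q^-$ is divergence-free, and then applies the divergence theorem twice---once for this combined field on the slab $\Omega^i_j$, once for $J$ alone to kill the residual term, the latter being precisely your $\Psi_i=0$. Your duality formulation $\Lg^*(\rho f)=\rho\,\Lg^\dag f$ is a slightly cleaner conceptual entry point, and your remark about the unboundedness obstacle applies equally to the paper's use of the divergence theorem on $\Omega^j_{N+1}$.
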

\begin{proof}
Let $\Omega^i_j$ denote the open region enclosed by $M_i$ and $M_j$, i.e.,
$$
\Omega^i_j=\{x : z_j<q^-(x)<z_i\} \quad \text{for } i<j.
$$
Also for convenience and consistent with our former convention, we allow $-1\le i<j\le N+1$,
and $\Omega^{-1}_j$ is understood as the region at one side of $M_j$ that contains $A$, and $\Omega^i_{N+1}$ as
that at one side of $M_i$ that contains $B$.
Notice that  $\Lg^\dag$ can be expressed  as
\begin{equation}\label{eqn:Ldag expressed in rho and J}
 \displaystyle{\Lg}^\dag=\frac{1}{\rho}\bigl[\nabla\cdot\left(\rho a\nabla\right)-J\cdot\nabla\bigr],
\end{equation}
where $J=\rho b-a\nabla\rho$ is the stationary probability current and is divergence free.
Thus from ${\Lg}^\dag q^-=0$, we deduce that
\begin{equation}\label{lemma Z identity}
\nabla\cdot\left(\rho a \nabla
q^- \right)=J \cdot\nabla q^- =\nabla\cdot \left(q^- J \right).
\end{equation}
Observe that
\begin{equation}\label{unit normal vector}
\displaystyle n(x)=\frac{\nabla q^-(x)}{|\nabla q^-(x)|}
\end{equation}
is the unit normal vector of the surface $M_i$.
Then for any $0\le i<j\le N$, by the divergence theorem,
\begin{equation*}
\begin{split}
Z_j-Z_i=&\int_{M_j}\langle \rho a\nabla q^-,
n\rangle\,\dif
\sigma_{M_j}-\int_{M_i}\langle \rho a\nabla q^-, n\rangle\,\dif \sigma_{M_i}=-\int_{\Omega^i_j}\nabla\cdot\left(\rho a\nabla
q^-\right)\,\dif x\\
=&-\int_{\Omega^i_j}\nabla\cdot\left(q^-J\right)\,\dif x
=\int_{M_j} q^-J\cdot n\,\dif
\sigma_{M_j}-\int_{M_i}q^-J\cdot n\,\dif
\sigma_{M_i}\\
=&z_{j}\int_{M_j} J\cdot n\,\dif
\sigma_{M_j}-z_i\int_{M_i}J\cdot n\,\dif \sigma_{M_i}\\
=&z_j\int_{\Omega^j_{N+1}}\nabla\cdot J\,\dif x-z_i\int_{\Omega^i_{N+1}}\nabla\cdot J\,\dif x=0,
\end{split}
\end{equation*}
where the last equality follows since $J$ is divergence free. 
\end{proof}
\begin{proof}[Proof of Theorem \ref{IFD in Diffusion}]
By duality, it suffices to show
\begin{equation*}
\int_{M_i} ({\mathcal{P}} f)(x)\,\mu_i(\dif x)=\sum_{j=0}^N
q_{i, j}\int_{M_j}f(x)\,\mu_j(\dif x), \quad 0\le i\le N,
\end{equation*}
for any nice test function $f$, where
$
({\mathcal{P}} f)(x)=\EE_x\bigl[f(Y_1)\bigr].
$
For each $0\le i\le N$, let  $u_i$ be the solution  to the  Dirichlet  problem:
\begin{equation*}
\begin{cases}
\Lg u_i=0 \quad\text{in }\Omega^{i-1}_{i+1},\\
u_i|_{M_{i-1}\cup M_{i+1}}=f|_{M_{i-1}\cup M_{i+1}}.
\end{cases}
\end{equation*}
Then ${\mathcal{P}} f$ and $u_i$ coincide on $M_i$. Thus we need to verify
\begin{equation}\label{equivalent condition for density of mu_i}
\begin{split}
\int_{M_i} u_i\rho_i\,\dif
\sigma_{M_i}&=q_{i,i-1}\int_{M_{i-1}}u_i\rho_{i-1}\,\dif
\sigma_{M_{i-1}}+q_{i,i+1}\int_{M_{i+1}}u_i\rho_{i+1}\,\dif
\sigma_{M_{i+1}}
\end{split}
\end{equation}
for each $0\le i\le N$ and every bounded smooth function $u_i$ defined
on $\Omega^{i-1}_{i+1}\cup M_{i-1}\cup M_{i+1}$ satisfying $\Lg u_i=0$
in $\Omega^{i-1}_{i+1}$.  By Lemma~\ref{lemma:Z} and definition of
$\rho_i$ in \eqref{eq:fhdensity}, \eqref{equivalent condition for
  density of mu_i} reduces to
\begin{equation*}
\begin{split}
\int_{M_i} \langle u_i\rho a\nabla q^-,n\rangle\,\dif
\sigma_{M_i}&=q_{i,i-1}\int_{M_{i-1}}\langle u_i\rho a\nabla
q^-,n\rangle\,\dif\sigma_{M_{i-1}}\\
&+q_{i,i+1}\int_{M_{i+1}}\langle
u_i\rho a\nabla q^-,n\rangle \,\dif\sigma_{M_{i+1}}.
\end{split}
\end{equation*}
Let us insert the values of $q_{i,i-1}$ and $q_{i,i+1}$ given by
 (\ref{transition probability 2}) into the last equation, and
multiply both sides by $z_{i-1}-z_{i+1}$, then we are left with checking
\begin{equation*}
\begin{split}
(z_{i-1}-z_{i+1})\int_{M_i}\langle u_i\rho a\nabla
q^-,n\rangle\,\dif\sigma_{M_i}&=(z_{i}-z_{i+1})\int_{M_{i-1}}\langle
u_i\rho a\nabla q^-,n\rangle\,\dif
\sigma_{M_{i-1}}\\&+(z_{i-1}-z_{i})\int_{M_{i+1}}\langle u_i\rho
a\nabla q^-,n\rangle\,\dif\sigma_{M_{i+1}}.
\end{split}
\end{equation*}
Moving the right hand side to the left, and regrouping these terms
properly into three surface integrals, we obtain
\begin{equation*}
\begin{split}
&z_{i+1}\left(\int_{M_{i-1}}\langle u_i\rho a\nabla q^-,n\rangle\,\dif
\sigma_{M_{i-1}}-\int_{M_{i}}\langle u_i\rho a\nabla
q^-,n\rangle\,\dif
\sigma_{M_{i}}\right)\\+&z_{i-1}\left(\int_{M_{i}}\langle u_i\rho
a\nabla q^-,n\rangle\,\dif\sigma_{M_{i}}-\int_{M_{i+1}}\langle u_i\rho
a\nabla q^-,n\rangle \,\dif
\sigma_{M_{i+1}}\right)\\-&z_{i}\left(\int_{M_{i-1}}\langle u_i\rho
a\nabla q^-,n\rangle\,\dif \sigma_{M_{i-1}}-\int_{M_{i+1}}\langle
u_i\rho a\nabla q^-,n\rangle\,\dif\sigma_{M_{i+1}}\right)=0.
\end{split}
\end{equation*}
Applying divergence theorem to each surface integral yields
\begin{equation}\label{reduced volume integrals for density of
mu_i}
\begin{split}
&z_{i+1}\int_{\Omega^{i-1}_i}\nabla\cdot\left(u_i\rho a\nabla
q^-\right)\,\dif
x+z_{i-1}\int_{\Omega^{i}_{i+1}}\nabla\cdot\left(u_i\rho a\nabla
q^-\right)\,\dif
x\\-&z_{i}\int_{\Omega^{i-1}_{i+1}}\nabla\cdot\left(u_i\rho a\nabla
q^-\right)\,\dif x=0.
\end{split}
\end{equation}
Now we calculate
\begin{equation*}
\nabla\cdot(u_i\rho a\nabla q^-)=\langle\nabla u_i,\rho a\nabla
q^-\rangle+u_i\nabla\cdot(\rho a\nabla q^-),
\end{equation*}
\begin{equation*}
\nabla\cdot(q^-\rho a\nabla u_i)=\langle\nabla q^-,\rho a\nabla
u_i\rangle+q^-\nabla\cdot(\rho a\nabla u_i).
\end{equation*}
Since $a$ is symmetric, we have
\begin{equation*}
\nabla\cdot(u_i\rho a\nabla q^-)=\nabla\cdot(q^-\rho a\nabla
u_i)-q^-\nabla\cdot(\rho a\nabla u_i)+u_i\nabla\cdot(\rho a\nabla q^-).
\end{equation*}
By  (\ref{lemma Z identity}), the last term on the right side is
just $u_i\nabla\cdot(q^-J)$. On the other hand,
since $\Lg$ can be expressed as
\[
 \Lg=\frac{1}{\rho}\bigl[\nabla\cdot\left(\rho a\nabla\right)+J\cdot\nabla\bigr],
\]
a calculation similar to the
derivation of  (\ref{lemma Z identity})
leads to
\begin{equation}\label{identity for u}
\nabla\cdot(\rho a\nabla u_i)=-\nabla\cdot(u_iJ).
\end{equation}
Combining the above identities, we obtain
\begin{equation}\label{divergence identity}
\begin{split}
\nabla\cdot(u_i\rho a\nabla q^-)=\nabla\cdot(q^-\rho a\nabla u_i+q^-u_iJ).
\end{split}
\end{equation}
Substituting this into  (\ref{reduced volume
integrals for density of mu_i}) and applying the divergence theorem and $q^-=z_i$ on $M_i$, we get
\begin{equation*}
\begin{split}
&\quad z_{i+1}\left(z_{i-1}\int_{M_{i-1}}\langle\rho a\nabla u_i+u_iJ,
n\rangle\,\dif \sigma_{M_{i-1}}-z_i\int_{M_{i}}\langle\rho a\nabla
u_i+u_iJ, n\rangle\,\dif
\sigma_{M_{i}}\right)\\ &+z_{i-1}\left(z_i\int_{M_{i}}\langle\rho
a\nabla u_i+u_iJ,
n\rangle\,\dif\sigma_{M_{i}}-z_{i+1}\int_{M_{i+1}}\langle\rho
a\nabla u_i+u_iJ, n\rangle\,\dif
\sigma_{M_{i+1}}\right)\\ &-z_{i}\left(z_{i-1}\int_{M_{i-1}}\langle\rho
a\nabla u_i+u_iJ, n\rangle\,\dif
\sigma_{M_{i-1}}-z_{i+1}\int_{M_{i+1}}\langle\rho a\nabla u_i+u_iJ,
n\rangle\,\dif\sigma_{M_{i+1}}\right)\\
& =0.
\end{split}
\end{equation*}
Regrouping these terms again into three new surface integrals and
applying the divergence theorem again, we may convert the last equation into
\begin{equation*}
\begin{split}
&z_{i-1}z_{i+1}\int_{\Omega^{i-1}_{i+1}}\nabla\cdot\left(\rho
a\nabla u_i+u_iJ\right)\,\dif
x-z_iz_{i+1}\int_{\Omega^i_{i+1}}\nabla\cdot\left(\rho a\nabla
u_i+u_iJ\right)\,\dif
x\\-&z_{i-1}z_i\int_{\Omega^{i-1}_{i}}\nabla\cdot\left(\rho a\nabla
u_i+u_iJ\right)\,\dif x=0.
\end{split}
\end{equation*}
Finally this follows from  (\ref{identity for u}) and  the proof is complete.
\end{proof}

\section{Proof of Lemma \ref{lemma:IFD implies assumption ii}}
\begin{proof}
We will prove (i) by showing that for all $n\ge 1$ and $i_k\in I$, $0\le k\le n$,
\begin{equation*}
\PP_{\mu}\bigl[\xi_k=i_k, 1\le k\le n\big|\ \xi_0=i_0\bigr]
=p_{i_0,i_1}p_{i_1,i_2}\cdots p_{i_{n-1},i_n}.
\end{equation*}
The proof goes by induction on $n$.
It is trivial for $n=1$.  Suppose that this is true
for some $n\ge 1$. Then for the case $n+1$,  the strong Markov property of $X$ gives
\begin{equation*}
\begin{split}
&\PP_{\mu}\bigl[\xi_k=i_k, 1\le k\le n+1\big|\ \xi_0=i_0\bigr]
= \PP_{\mu_{i_0}}\bigl[Y_k\in M_{i_k},  1\le k\le n+1\bigr]\\
=&\EE_{\mu_{i_0}}\bigl[\one_{\{Y_1\in M_{i_1}\}}\PP_{\mu_i}\bigl[Y_{k-1}\in M_{i_k},  2\le k\le n+1\ \big|\FF_{\tau_1}\bigr]\bigr]\\
=&\EE_{\mu_{i_0}}\bigl[\one_{M_{i_1}}(Y_1)\PP_{X(\tau_1)}\bigl[Y_{k-1}\in M_{i_k},  2\le k\le n+1\bigr]\bigr],
\end{split}
\end{equation*}
where the second line follows from time homogeneity.  Note that the
distribution of $Y_1=X(\tau_1)$ relative to the probability law
$\PP_{{\mu}_{i_0}}$ is ${\mathcal{P}}^*\mu_{i_0}$, which by
assumption, is given by $\sum_{j\in I} p_{i_0, j}\mu_j$. So the last
display equals to 
\begin{equation*}
\begin{split}
&=\int_\MM\one_{M_{i_1}}(x)\PP_{x}\bigl[Y_k\in M_{i_{k+1}},  1\le k\le n\bigr]\,({\mathcal{P}}^*\mu_{i_0})(\dif x)\\
&=\sum_{j\in I}p_{i_0,j}\int_{M_{i_1}}\PP_x\bigl[\xi_{k}={i_{k+1}},  1\le k\le n\bigr]\,\mu_j(\dif x)\\
&=p_{i_0,i_1}\PP_{\mu_{i_1}}\bigl[\xi_{k}={i_{k+1}},  1\le k\le n\bigr]\\
&=p_{i_0,i_1}\PP_{\mu}\bigl[\xi_{k}={i_{k+1}},  1\le k\le n\big|\ \xi_0=i_1\bigr].
 \end{split}
\end{equation*}
Using the induction hypothesis yields the desired equality for $n+1$,
which completes the inductive step and the assertion on the Markovianity of  $\{\xi_n:n\in\NN\}$ is proved.

The proof of (ii) is similar by induction on $n$.  We just outline the
inductive step below. By the strong Markov property, we compute
\begin{equation*}
\begin{split}
&\EE_{\mu}\bigl[\alpha_n \one_{\{\xi_k=i_k,1\le k\le n\}}\big|\ \xi_0=i_0\bigr]
=\EE_{\mu_{i_0}}\Bigl[\alpha_n \prod_{1\le k\le n}\one_{M_{i_k}}(Y_k)\Bigr]\\
=&\EE_{\mu_{i_0}}\biggl[\one_{M_{i_1}}(Y_{1})\EE_{X(\tau_1)}\Bigl[\alpha_{n-1}\prod_{2\le k\le n}\one_{M_{i_k}}(Y_{k-1})
\Bigr]\biggr]\\
=&p_{i_0,i_1}\EE_{\mu_{i_1}}\bigl[\alpha_{n-1}\prod_{1\le k\le n-1}\one_{M_{i_{k+1}}}(Y_{k})\bigr]\\
=&p_{i_0,i_1}\EE_{\mu}\bigl[\alpha_{n-1}\one_{\{\xi_k=i_{k+1},1\le k\le n-1\}}\big|\ \xi_0=i_1\bigr],
\end{split}
\end{equation*}
where in the third equality we have used the assumption that under the law $\PP_{\mu_{i_0}}$, the probability
of the event $\{Y_1\in M_{i_1}\}$ is $p_{i_0,i_1}$, and given this event,
the conditional distribution of $Y_1$ is $\mu_{i_1}$. On the other hand, by (i), we have
\begin{equation*}
 \begin{split}
&\PP_{\mu}\bigl[\xi_k=i_k,1\le k\le n\big|\ \xi_0=i_0\bigr]
=p_{i_0,i_1}\PP_{\mu}\bigl[\xi_k=i_{k+1},1\le k\le n-1\big|\ \xi_0=i_1\bigr].
 \end{split}
\end{equation*}
So we obtain
$$
\EE_{\mu}\bigl[\alpha_n\big|\xi_k=i_k,0\le k\le n\bigr]
=\EE_{\mu}\bigl[\alpha_{n-1}\big|\xi_k=i_{k+1},0\le k\le n-1\bigr].
$$
This completes the  inductive  step.
\end{proof}

\section{Proof of Lemma \ref{lemma:calculate MFPT}}

\begin{proof}
  Let us focus on the case $j = N$, the proof for other $j$ is the
  same.  For simplicity of notation, we will write $h_i = h_{i,N}$.

We first show that $(h_i)_{i\in I}$ satisfies (\ref{eqn:discrete Poisson problem for traditional milestoning}).
Note that the event $\{D\ge n\}$ belongs to the $\sigma$-field generated by $\xi_0,\cdots,\xi_{n-1}$.
Then by Fubini's Theorem and Lemma \ref{lemma:IFD implies assumption ii}, we obtain
\begin{multline}\label{eqn:expand the MFPT equation}
h_i=\EE_{\mu_i}[\tau_D]=\EE_{\mu_i}\biggl[\sum_{n=1}^{D}\alpha_n\biggr]=\EE_{\mu_i}\biggl[\sum_{n=1}^{\infty}\alpha_n\one_{\{D\ge n\}}\biggr]
=\sum_{n=1}^{\infty}\EE_{\mu_i}\bigl[\alpha_n\one_{\{D\ge n\}}\bigr]\\
=\sum_{n=1}^{\infty}\EE_{\mu_i}\bigl[\one_{\{D\ge n\}}\EE_{\mu_i}\bigl[\alpha_n\big|\xi_1,\cdots,\xi_n\bigr]\bigr]
=\sum_{n=1}^{\infty}\EE_{\mu_i}\bigl[\one_{\{D\ge n\}}t_{\xi_{n-1},\xi_{n}}\bigr].
\end{multline}
Let us assume $i\neq N$. Then $D\ge 1$ and
the last-written sum can be split into
\begin{equation*}
  \EE_{\mu_i}\bigl[t_{i,\xi_{1}}\bigr] +\sum_{n=2}^{\infty}\EE_{\mu_i}\bigl[\one_{\{D\ge n\}}t_{\xi_{n-1}, \xi_{n}}\bigr],
\end{equation*}
where the first term  amounts to
$\sum_{j\in I}t_{i,j}p_{i,j}$,
while the second summation term is equal to
\begin{equation*}
\begin{split}
 &\sum_{n=2}^{\infty}\sum_{j\in I}\EE_\mu\bigl[\one_{\{D\ge n\}}t_{\xi_{n-1},\xi_{n}}\big|\ \xi_0=i,\xi_1=j\bigr]\PP_\mu\bigl[\xi_1=j\ \big|\ \xi_0=i]\\
=&\sum_{j\in I}p_{i,j}\sum_{n=2}^{\infty}\EE_\mu\bigl[\one_{\{\xi_m\neq N, 1\le m\le n-1\}}t_{\xi_{n-1},\xi_{n}}\big|\ \xi_0=i,\xi_1=j\bigr].
\end{split}
\end{equation*}
By Lemma \ref{lemma:IFD implies assumption ii}, under the law
$\PP_\mu$, $\{\xi_n:n\in\NN\}$ has the time-homogeneous Markov
property, therefore the last display equals to
\begin{equation*}
\begin{split}
&\sum_{j\in I}p_{i,j}\sum_{n=2}^{\infty}\EE_j\bigl[\one_{\{\xi_m\neq N, 0\le m\le n-2\}}t_{\xi_{n-2},\xi_{n-1}}\bigr]\\
=&\sum_{j\in I}p_{i,j}\sum_{n=1}^{\infty}\EE_j\bigl[\one_{\{D\ge n\}}t_{\xi_{n-1},\xi_{n}}\bigr]=\sum_{j\in I} p_{i,j}h_j,
\end{split}
\end{equation*}
where the last step follows from (\ref{eqn:expand the MFPT equation}).

To prove the uniqueness, we will show that $(h_i)_{i\in I}$ is the
minimal nonnegative solution to the discrete Poisson problem
(\ref{eqn:discrete Poisson problem for traditional milestoning}).  To
this end, suppose that $(y_i)_{i\in I}$ is another nonnegative
solution. Then $y_N=0$.  Consider $i\neq N$, we have
\begin{equation*}
y_i=\sum_{j\in I}p_{i,j}t_{i,j}+\sum_{j\in I} p_{i,j}y_j=\sum_{j\in I}p_{i,j}t_{i,j}+\sum_{j\neq N} p_{i,j}y_j.
\end{equation*}
Using this identity to replace $y_j$ on the right hand side and by Lemma \ref{lemma:IFD implies assumption ii}, we deduce
\begin{equation*}
\begin{split}
y_i&=\sum_{j\in I}p_{i,j}t_{i,j}+\sum_{j\neq N} p_{i,j}\biggl(\sum_{k\in I}p_{j,k}t_{j,k}+\sum_{k\neq N} p_{j,k}y_k\biggr)\\
&=\sum_{j\in I}\PP_\mu\bigl[\xi_1=j\ \big|\ \xi_0=i\bigr]\EE_\mu\bigl[\alpha_1\big|\ \xi_0=i, \xi_1=j\bigr]\\&+
\sum_{j\neq N}\sum_{k\in I}\PP_\mu\bigl[\xi_1=j, \xi_2=k\ \big|\ \xi_0=i\bigr]\EE_\mu\bigl[\alpha_2\big|\ \xi_0=i, \xi_1=j, \xi_2=k\bigr]\\
&+\sum_{j\neq N}\sum_{k\neq N}p_{i,j}p_{j,k}y_k\\
&=\EE_{\mu_i}\bigl[\alpha_1\bigr]+\EE_{\mu_i}\bigl[\alpha_2\one_{\{D\ge 2\}}\bigr]+\sum_{j\neq N}\sum_{k\neq N}p_{i,j}p_{j,k}y_k.
\end{split}
\end{equation*}
By repeated substitution in the last term, we obtain after $n$ steps
\begin{equation*}
\begin{split}
 y_i&=\sum_{m=1}^n\EE_{\mu_i}\bigl[\alpha_m\one_{\{D\ge m\}}\bigr]+\text{ Nonnegative Remainder}\ge\sum_{m=1}^n\EE_{\mu_i}\bigl[\alpha_m\one_{\{D\ge m\}}\bigr].
\end{split}
\end{equation*}
Sending $n\rightarrow\infty$, by (\ref{eqn:expand the MFPT equation}), we find
\begin{equation*}
y_i\ge \sum_{m=1}^\infty\EE_{\mu_i}\bigl[\alpha_m\one_{\{D\ge m\}}\bigr]=h_i,
\end{equation*}
which completes the proof.
\end{proof}

\section{Proof of Lemma \ref{inheritable property of invariant family of distributions}}
\begin{proof}
Let $\left\{\left(\xi_n', \tau_n'\right):n\in\NN\right\}$ be the coarse-grained milestoning chain associated with ${\MM'}$
and set $Y_n'=X(\tau_n')$.
Let $({\mathcal{P}}')^*$ denote the linear shift operator associated with the transition probability of the Markov chain $\{Y_n':n\in\NN\}$.

%
What we need to prove is that
 for every $i,j\in I'$, and any measurable set $E\subset M_j$,
$$
(({\mathcal{P}}')^*\mu_i)(E)=p'_{ij}\mu_j(E),\quad \text{with } p'_{ij}=(({\mathcal{P}}')^*{\mu_i})(M_j).
$$
Fix $i$ and $j$ in $I'$. 
By definition, the left hand member is
$$
\PP_{\mu_i}\left[X\left(H^+_{M' \smallsetminus M'_i}\right)\in E\right].
$$
Put $\eta=X\left(H^+_{M' \smallsetminus M'_i}\right)$ and $J'=(I\smallsetminus I')\cup\{i\}$.
Using the strong Markov property of  $X$ and the assumption that
$$
{\mathcal{P}}^*\mu_k=\sum_{\ell\in I}p_{k,\ell}\mu_\ell, \quad \text{with } p_{k,\ell}=({\mathcal{P}}^*{\mu_k})(M_\ell),
$$
we  obtain
that for $k\in J'$,
\begin{equation*}
\begin{split}
\PP_{\mu_k}\bigl[\eta\in E\bigr]&
=\EE_{\mu_k}\bigl[\PP_{Y_1}\bigl[\eta\in E\bigr]\bigr]
=\sum_{\ell\in I}p_{k,\ell}\PP_{\mu_\ell}\bigl[\eta\in E\bigr]\\
&=\sum_{\ell\in I'\smallsetminus\{i\}}p_{k,\ell}\PP_{\mu_\ell}\bigl[\eta\in E\bigr]+\sum_{\ell\in J'}p_{k,\ell}\PP_{\mu_\ell}\bigl[\eta\in E\bigr]\\
&=p_{k,j}{\mu_j}(E)+\sum_{\ell\in J'}p_{k,\ell}\PP_{\mu_\ell}\bigl[\eta\in E\bigr].
\end{split}
\end{equation*}
In the last step, we use the fact that
 under the law  $\PP_{\mu_\ell}$ for $\ell\in I'\smallsetminus\{i\}$, we have $H^+_{M' \smallsetminus M'_i}=0$ and $\eta=X(0)$ almost surely.
Using this identity to replace the term $\PP_{\mu_\ell}\bigl[\eta\in E\bigr]$ on the right hand side, we see that
\begin{equation*}
\begin{split}
\PP_{\mu_k}\bigl[\eta\in E\bigr]&=p_{k,j}{\mu_j}(E)+\sum_{\ell\in J'}p_{k,\ell}p_{\ell, j}{\mu_j}(E)
+\sum_{\ell,m\in J'}p_{k,\ell}p_{\ell, m}\PP_{\mu_m}\bigl[\eta\in E\bigr]\\
& =\PP_{\mu_k}\bigl[\xi_1=j\bigr]{\mu_j}(E)+\PP_{\mu_k}\bigl[\xi_1\in J', \xi_2=j\bigr]{\mu_j}(E)
\\
& \qquad +\sum_{\ell,m\in J'}p_{k,\ell}p_{\ell, m}\PP_{\mu_m}\bigl[\eta\in E\bigr].
\end{split}
\end{equation*}
Continuing in the obvious way, we find
\begin{equation*}
\begin{split}
\PP_{\mu_k}\bigl[\eta\in E\bigr]&=\mu_j(E)\sum_{n=1}^m\PP_{\mu_k}\bigl[\xi_\ell\in J', 1\le \ell <n, \ \xi_n=j\bigr]
+\text{ Nonnegative Remainder}.
\end{split}
\end{equation*}
Sending $m\rightarrow\infty$ yields
\begin{equation*}
\PP_{\mu_k}\bigl[\eta\in E\bigr]\ge\mu_j(E)\sum_{n=1}^\infty\PP_{\mu_k}\bigl[\xi_\ell\in J', 1\le \ell <n, \ \xi_n=j\bigr].
\end{equation*}
Observe that the summation on the right hand side of the last inequality actually gives the probability that
the chain $\{\xi_n:n\in\NN\}$ starting at $\xi_0=k$  will visit  $I'\smallsetminus\{i\}$ in some step and occupy the state $j$
at the first time of visiting $I'\smallsetminus\{i\}$, which may also be expressed as
$$
\PP_{\mu_k}\left[H^+_{M' \smallsetminus M'_i}=\tau_n \text{ for some } n\ge 0, \eta=X\left(H^+_{M' \smallsetminus M'_i}\right)\in M_j\right].
$$
Since  the diffusion process $X$ has continuous trajectories, and $\lim_{n\rightarrow\infty}X(\tau_n)$ does not exist  by construction,
it follows that $\lim_{n\rightarrow\infty}\tau_n=\infty$.
This implies
$$
\PP_{\mu_k}\left[H^+_{M' \smallsetminus M'_i}=\tau_n \text{ for some } n\ge 0\right]=1.
$$
Thus we obtain
\begin{equation*}
\PP_{\mu_k}\bigl[\eta\in E\bigr]\ge\mu_j(E)\PP_{\mu_k}\bigl[\eta\in M_j\bigr].
\end{equation*}
To turn the last inequality into an equality, we observe that the last inequality also holds for $M_j\smallsetminus E$ and therefore
\begin{equation*}
\begin{split}
\PP_{\mu_k}\bigl[\eta\in M_j\bigr]&=\PP_{\mu_k}\bigl[\eta\in E\bigr]+\PP_{\mu_k}\bigl[\eta\in M_j\smallsetminus E\bigr]\\
&\ge\mu_j(E)\PP_{\mu_k}\bigl[\eta\in M_j\bigr]+\mu_j(M_j\smallsetminus E)\PP_{\mu_k}\bigl[\eta\in M_j\bigr]\\
&=\PP_{\mu_k}\bigl[\eta\in M_j\bigr].
\end{split}
\end{equation*}
Since the left and right hand sides are equal, we must have
$$
\PP_{\mu_k}\bigl[\eta\in E\bigr]=\mu_j(E)\PP_{\mu_k}\bigl[\eta\in M_j\bigr].
$$
In particular, for $k=i$,
this is exactly what was to be shown.
The proof is complete.
\end{proof}

\frenchspacing


\begin{thebibliography}{99}
\bibitem{AFTW06a}
R. J. Allen, D. Frenkel, and P. R. ten Wolde. Forward flux sampling-type schemes for simulating rare events: Efficiency analysis.
\emph{J. Chem. Phys.}
$\bf{124}$(19), 194111 (2006).

\bibitem{ABRE16} 
D. Aristoff, J. M. Bello-Rivas, and R. Elber. A Mathematical Framework for Exact Milestoning.
\emph{Multiscale Model. Simul.}
\textbf{14}(1),  301-322 (2016).

\bibitem{BRE15} J. M. Bello-Rivas and R. Elber. Exact
  milestoning. \emph{J. Chem. Phys.}, $\bf{142}$, 094102 (2015).

\bibitem{BEGK02}
A. Bovier, M. Eckhoff, V. Gayrard, and M. Klein. Metastability and low lying spectra in reversible
Markov chains. \emph{Comm. Math. Phys.}
$\bf{228}$(2), 219 (2002).



\bibitem{BEGK04}
A. Bovier, M. Eckhoff, V. Gayrard, and M. Klein. Metastability in reversible diffusion processes I. Sharp
asymptotics for capacities  and exit times.   \emph{J. Eur. Math. Soc.}
$\bf{6}$(4), 399 (2004).


\bibitem{BGK05}
A. Bovier,  V. Gayrard, and M. Klein.  Metastability in reversible diffusion processes II. Precise asymptotics
for small eigenvalues.  \emph{J. Eur. Math. Soc.}
$\bf{7}$(1), 69 (2005).




\bibitem{DWD09}
A. Dickson, A. Warmflash, and A. R. Dinner. Nonequilibrium umbrella sampling in spaces of many order parameters. \emph{J. Chem. Phys.} $\bf{130}$, 074104 (2009)

\bibitem{DSS10}
N. Djurdjevac,  M. Sarich, and C. Sch\"{u}tte.  On Markov state models for metastable processes. in \emph{Proceeding
of the ICM 2010}.

\bibitem{DSS12}
N. Djurdjevac,  M. Sarich, and C. Sch\"{u}tte.  Estimating the eigenvalue error of Markov state models.
\emph{Multiscale Model.
Simul.} $\bf{10}$(1), 61 (2012).

\bibitem{ERVE02}
W. E, W. Ren, and E. Vanden-Eijnden.
String method for the study of rare events.
\emph{Phys. Rev. B}  $\bf{66}$, 052301 (2002).

\bibitem{ERVE05}
W. E, W. Ren, and E. Vanden-Eijnden.
Transition pathways in complex systems: Reaction coordinates, isocommittor surfaces, and transition tubes.
\emph{Chem. Phys. Lett.} $\bf{413}$, 242 (2005).

\bibitem{ERVE05a}
W. E, W. Ren, and E. Vanden-Eijnden.
Finite temperature string method for the study of rare events.
\emph{J. Phys. Chem. B} $\bf{109}$, 6688 (2005).

\bibitem{EVE06}
W. E and E. Vanden-Eijnden. Towards a theory of transition paths. \emph{J. Stat. Phys.} $\bf{123}$, 503 (2006).

\bibitem{EVE09}
W. E and E. Vanden-Eijnden. \emph{Transition path theory and path-finding algorithms for the study of rare
events}. Extended version available from the ArXiv,  (2009).

\bibitem{EVE10}
W. E and E. Vanden-Eijnden. Transition-Path theory and path-finding algorithms for the study of rare events.
\emph{Annu. Rev. Phys. Chem.} $\bf{61}$, 391 (2010).

\bibitem{E07}
R. Elber. A milestoning study of the kinetics of an allosteric transition: atomically detailed simulations of deoxy Scapharca hemoglobin.
  \emph{Biophys. J.}  $\bf{92}$, L85 (2007).

\bibitem{FE04}
A. K. Faradjian and R. Elber. Computing time scales from reaction coordinates by milestoning.   \emph{J.
Chem. Phys.} $\bf{120}$, 10880 (2004).

\bibitem{GVE13} E. Guarnera and E. Vanden-Eijnden. Optimized Markov
  State Models for Metastable Systems. \emph{J.  Chem. Phys.}
  $\bf{145}$, 024102 (2016).




\bibitem{LN15}
J. Lu and J. Nolen. Reactive trajectories and the transition path processes. \emph{Probab. Theory Relat. Fields} \textbf{161}, 195--244 (2015).

\bibitem{LVE13}
J. Lu and E. Vanden-Eijnden. Markov state modeling and
metastability. In preparation.

\bibitem{LVE14} 
J. Lu and E. Vanden-Eijnden.  Exact dynamical
  coarse-graining without time-scale separation. \emph{J. Chem. Phys.}
  \textbf{141}, 044109 (2014).


\bibitem{MD05}
A. Ma and A. R. Dinner.
Automatic method for identifying reaction coordinates in complex systems.
\emph{J. Phys. Chem. B} $\bf{109}$, 6769 (2005).

\bibitem{MFVEC06}
L. Maragliano, A. Fischer, E. Vanden-Eijnden, and G. Ciccotti.
String method in collective variables: Minimum free energy paths and isocommittor surfaces.
\emph{J. Chem. Phys.} $\bf{125}$, 024106 (2006).


\bibitem{MSVE06}
P. Metzner, C. Sch\"{u}tte, and E. Vanden-Eijnden. Illustration of transition path theory on a collection of
simple examples. \emph{J. Chem. Phys.}
$\bf{125}$, 084110 (2006).

\bibitem{MSVE09}
P. Metzner, C. Sch\"{u}tte, and E. Vanden-Eijnden. Transition path theory for Markov jump processes.   \emph{Multiscale
Model. Simul.} $\bf{7}$, 1192 (2009).

\bibitem{MVEB04}
D. Moroni, T. S. van Erp, and P. G. Bolhuis. Investigating rare events by transition interface sampling. \emph{Physica A} $\bf{340}$,
395-401 (2004).




\bibitem{PT06}
B. Peters and B. L. Trout.
Obtaining reaction coordinates by likelihood maximization.
\emph{J. Chem. Phys.} $\bf{125}$, 054108 (2006).


\bibitem{SF06}
D. Shalloway and A. K. Faradjian. Efficient computation of the first passage time distribution of
the generalized master equation by steady-state relaxation.
\emph{J. Chem. Phys.} $\bf{124}$, 054112
(2006).


\bibitem{SNLSVE11}
C. Sch\"{u}tte, F. No\'{e}, J. Lu, M. Sarich, and E. Vanden-Eijnden. Markov state models based on milestoning.
\emph{J. Chem. Phys.} $\bf{134}$, 204105 (2011).

\bibitem{SNS10}
M. Sarich, F. No\'{e}, and C. Sch\"{u}tte. On the approximation quality of Markov state models.  \emph{Multiscale Model.
Simul.} $\bf{8}$, 1154 (2010).





\bibitem{VAMFTW07}
C. Valeriani, R. J. Allen, M. J. Morelli,  D. Frenkel, and P. R. ten Wolde.
Computing stationary distributions in equilibrium and nonequilibrium systems with forward flux sampling.
\emph{J. Chem. Phys.}
$\bf{127}$(11), 114109 (2007).


\bibitem{VE06}
E. Vanden-Eijnden. Transition Path Theory. In M. Ferrario, G. Ciccotti,
and K. Binder, editors, \emph{Computer Simulations in Condensed Matter: From
Materials to Chemical Biology}, volume 1, pp. 439-478. Springer,  2006.

\bibitem{VEV09}
E. Vanden-Eijnden and M. Venturoli. Markovian milestoning with Voronoi tessellations. \emph{J.
Chem. Phys.} $\bf{130}$,  194101 (2009).

\bibitem{VEV09a}
E. Vanden-Eijnden and M. Venturoli.
Revisiting the finite temperature string method for the calculation of reaction tubes and free energies.
\emph{J. Chem. Phys.} $\bf{130}$,  194103 (2009).

\bibitem{VEV09b}
E. Vanden-Eijnden and M. Venturoli. 
Exact rate calculations by trajectory parallelization and tilting. \emph{J. Chem. Phys.} $\bf{131}$, 044120 (2009)


\bibitem{VEVCE08} E. Vanden-Eijnden, M. Venturoli, G. Ciccotti, and
  R. Elber. On the assumptions underlying milestoning. \emph{J.
    Chem. Phys.} $\bf{129}$, 174102 (2008).

\bibitem{WBD07} A. Warmflash, P. Bhimalapuram, and A. R. Dinner,
  Umbrella sampling for nonequilibrium
  processes. \emph{J. Chem. Phys.} $\bf{127}$, 154112 (2007).

\bibitem{WES07}
A. M. A. West, R. Elber, and D. Shalloway. Extending molecular dynamics time scales with milestoning:
Example of complex kinetics in a solvated peptide.   \emph{J.
Chem. Phys.}  $\bf{126}$,  145104 (2007).

\end{thebibliography}
\end{document}